\documentclass[letterpaper, 10 pt, conference]{ieeeconf}

\IEEEoverridecommandlockouts                              


\usepackage{nomencl}
\makeglossary
\usepackage{graphicx}
\usepackage{caption}
\usepackage{subcaption}
\usepackage{algorithm}
\usepackage{algorithmic}
\usepackage{amsmath}
\usepackage{graphicx}
\usepackage{enumerate}%
\usepackage{type1cm}
\usepackage{lettrine}
\usepackage{breqn}
\usepackage{graphicx}
\usepackage{amssymb}
\usepackage{framed}
\usepackage{tikz}
\usepackage{amsmath}
\usepackage{setspace}
 \usepackage{pgfplots}
\usepackage{caption}
\usepackage{subcaption}
\usepackage{mathtools}

\newtheorem{mydef}{\textnormal{\textbf{Definition}}}

\usepackage{epstopdf}

\usepackage{bm}

  \newcommand{\diag}{\mathop{\rm diag}}

\newtheorem{cond}{\textnormal{\textbf{Condition}}}

\newtheorem{problem}{\textnormal{\textbf{Problem}}}
\newtheorem{assumption}{\textnormal{\textbf{Assumption}}}

\newtheorem{lemma}{\textnormal{\textbf{Lemma}}}

\usepackage{calc}


\newcommand{\matr}[1]{\mathbf{#1}}

\newtheorem{proposition}{\textnormal{\textbf{Proposition}}}
\newtheorem{theorem}{\textnormal{\textbf{Theorem}}}

\newcommand{\nunder}[2][5]{\mathrlap{\mkern\the\numexpr#1/2mu\relax\underlipe{\phantom{\mathrm{#2}\mkern-#1mu}}}#2}
\usepackage[font={footnotesize}]{caption}
\addtolength{\belowcaptionskip}{-0.5mm}
\addtolength{\abovecaptionskip}{0mm}

\title{\LARGE \bf
Short-term Predictable Power of a Wind Farm via Distributed Control of Wind Generators with Integrated Storage
}


\author{Stefanos~Baros$^{1}$ 
\thanks{$^{1}$Stefanos Baros is with the ECE department
of Carnegie Mellon University, Pittsburgh, PA, 15213 USA e-mail: {\tt sbaros@andrew.cmu.edu}}}


\begin{document}

\maketitle
\thispagestyle{empty}
\pagestyle{empty}

\begin{abstract}
 In this paper we introduce a Leader-follower consensus  protocol and study its stability properties with and without communication delays. On the practical side, we explore its application on coordinating a group of wind Double-Fed Induction Generators (DFIGs) with integrated storage.  To begin with, we establish asymptotic stability of the consensus protocol by employing singular perturbation theory. Subsequently, we  establish asymptotic stability of the protocol under communication delays using a Lyapunov-Krasovskii functional. Lastly, we use the proposed protocol to design a methodology that can be adopted by a fleet of state-of-the-art wind generators (WGs).  The objective is that the WGs self-organize and control their storage devices such that WF total power output is tracking a reference while equal contribution from each storage device is attained i.e equal power output from each storage. We demonstrate the effectiveness of our results and the corresponding approach via simulations on the IEEE 24-bus RT system.

\end{abstract}


\IEEEpeerreviewmaketitle

\section{Introduction}

According to a US Department of Energy (DOE) study \cite{windvision}, wind generators (WGs) are going to produce one of the largest shares of renewable energy in the future. Specifically, the same study argues that, by 2020, 10\% of the US electricity demand is expected to be produced by wind generators, offshore and onshore \cite{windvision}. Clearly, US stake their renewable energy future on wind generators for meeting their national goals regarding the reduction of the carbon dioxide emissions. A similar status is evident in most of the countries around the world. However, when power systems accommodate high-levels of wind integration they have to face a crucial challenge. That is, to guarantee their stability, their reliability and their dynamical performance which becomes highly dependent on the control methods that are adopted by the various WGs. 

 \par  The problem of coordinating and controlling a group of WGs in order for the WF power output to track a reference had been studied in \cite{constantpower},\cite{fullydistribdfig},\cite{biegel}. The authors in \cite{constantpower} introduced  a centralized two-layer constant power control scheme for a
WF comprised with wind DFIGs  with integrated storage. We emphasize  that, wind DFIGs with energy storage are the gold standard on WGs technology, able to offer increased flexibility for short-term scheduling and power output smoothing  \cite{integrationofenergystorage},\cite{millerge}. On the high-layer \cite{constantpower}, the wind farm  supervisory controller computed the power references for the low-layer controllers of the DFIGs, where on the low-layer, each DFIG controller generated power equal to the reference using the storage. It is important to underline that centralized control schemes as the one in \cite{constantpower}, require communication links from each WG  to a central wind farm controller and extensive computations centrally. As a result,  centralized approaches are plagued by expensive communication network costs, single-point failures, high computational cost and delays \cite{selforga}. In the future, the number of large WFs with hundreds of WGs is going to increase all over the world. This prediction rests on the numerous renewable energy national plans around the world that have the expected wind generation dominating the total generation share coming from the renewable energy resources. Nevertheless, in high-wind-integration settings, the disadvantages of using centralized approaches for the coordination and control of large wind farms will become more pronounced. Consider that, with centralized approaches, the performance and the reliability of a WF in providing services can be severely compromised.  On the other hand, distributed  methods for controlling WFs with state-of-the-art WGs are becoming very appealing due to the reduced communication infrastructure they require, the low computational costs and the increased reliability that characterizes them \cite{selforga}, \cite{biegel}. In fact, such methods can enable future  WFs to perform robustly and reliably toward achieving certain objectives e.g rapidly track a power reference point, providing frequency and inertial response or alleviating intermittency. Distributed approaches for addressing the problem of coordination and control of WGs were studied in \cite{fullydistribdfig},\cite{biegel}. Particularly, in \cite{fullydistribdfig}, the authors proposed a strategy, based on multi-agent theory, to coordinate the WGs in a microgrid. The total available wind power and the total demand were retrieved in a distributed way, using agents at each bus and two average consensus protocols that were executed in parallel. Subsequently, the total demand and the wind power were used to compute the set points of the WGs.  In the same spirit, the authors in \cite{biegel}, derived a distributed  controller to regulate the set-points of various WGs. The control objectives for these controllers were to generate a specific total power output and at the same time minimize the fatigue on the wind turbines.
\par In \cite{fullydistribdfig},\cite{biegel} distributed control of DFIGs with no integrated storage is concerned. Also, since regulation of the WGs set-points was aimed, only control of the Rotor-side-converter (RSC) is studied. In this work, we are focusing on the particular problem of controlling the storage devices of \textit{state-of-the-art WGs} in a fair-sharing fashion such that the WF total power matches a committed power output, that can be constant or quasi-constant. The only work related to this problem is the work in \cite{constantpower}, which considered a centralized approach. We underline that state-of-the-art WGs are expected to be widely deployed in the future. In light of that, deriving new distributed methods for their efficient control and coordination becomes both an emerging and important challenge. 

\subsection{Our Contributions} 
 Motivated by multi-agent systems theory, we make the following contributions toward addressing the problem of distributed coordination and control of state-of-the-art WGs: $a)$ We  propose a distributed leader-follower consensus protocol that requires minimal  communication links. Using singular perturbation arguments we prove that this protocol converges to its \textit{asymptotically stable} equilibrium point, under certain conditions.   $b)$ Further, we prove that the proposed protocol possesses a \textit{delay-independent stability property} which induces certain robustness to it with respect to time-delays that usually arise in communication networks $c)$ On the practical side, we adopt the consensus protocol in a WF set-up and show that it can achieve the following objectives. It can effectively coordinate the storage devices of a group of WGs such that the WF total power matches a reference while consensus on the power output of the storage devices is reached.

\par We outline the rest of the paper as follows. In Section II, we provide some preliminares and introduce the leader-follower consensus protocol.
In Section III, stability analysis of the proposed protocol is studied, using singular perturbation arguments. In the same Section, stability analysis under time-delays using Lyapunov-Krasovskii functionals is  presented in detail. In Section IV, we apply the proposed protocol in a WF, to address the problem of fair utilization of WGs storage devices for short-term power output tracking. In Section V, we validate our results with simulations on the modified IEEE-RTS 24-bus system. Finally, Section VI concludes the paper.

\section{Problem Formulation}
\label{problemform}
\subsection{Notation}

In this paper we use standard notation. We denote by $\mathbb{R}$  the set of reals and by $\mathbb{C}$ the set of complex numbers. Also, we denote by $\mathbb{R}_{+}$ the set of non-negative real numbers and with $\mathbb{R}_{++}$ the set of positive reals. The $m$-dimensional Euclidean space is denoted by $\mathbb{R}^m$. We denote vectors and matrices with bold characters. With $\matr{A}\in\mathbb{R}^{m\times n}$ we denote a $m\times n$ matrix of reals. With $\matr{A}^\top$ we denote the transpose of $\matr{A}$ and with $[a]_{ij}$ the $(i,j)$-entry of the matrix $\matr{A}$. With $\matr{A}\succ0\; (\matr{A}\succeq 0)$ we denote that the matrix $\matr{A}$  is positive definite (semi-definite). The spectrum  of the matrix $\matr{A}$ (set of eigenvalues) is denoted by $\sigma(\matr{A})$. A $n\times n$ diagonal matrix $\matr{B}$ is denoted by $\matr{B}=\diag[b_i]_{i=1}^n$. Define a $n$-dimensional vector $\matr{a}$ as $\matr{a}=[a_i]_{i=1}^n$. The operator $\|\cdot\|_2$ denotes the standard Euclidean norm $\mathcal{L}_2$ .  The maximum value of the vector $\matr{a}$ is denoted by $\bar{\matr{a}}$. 
  Similarly the maximum value of a scalar quantity $z$ is given by $\bar{z}$.  With $\matr{I}_n$ we denote the $n\times n$ identity matrix and the with $\matr{0}_{n\times 1}$ and $\matr{1}_{n\times 1}$ a $n\times 1$ column vector of zeros and ones respectively.
 With $x$ being a variable, we denote its time derivative with $\dot{x}$ ($\frac{dx}{dt}$). The operator $\operatorname{Re}(\cdot)$ returns the real part of an imaginary number $\cdot\in\mathbb{C}$.
\vspace{-3mm}
\subsection{Leader-Follower Consensus Protocol}
Consider  a set of agents denoted by  $\mathcal{V}\triangleq\{1,...,n\}$ and indexed by indexed by $i\in\mathcal{V}$. The exchange of information between the agents can be represented as a graph $\mathcal{G}=(\mathcal{V},\mathcal{E})$ where $\mathcal{V}$ is the set of nodes and 
\begin{equation}
\mathcal{E}\triangleq\Big \{(i,j)\;|\; j=i+1,\; i\in\mathcal{V}\setminus \{n\},\;j\in\mathcal{V}\setminus \{1\}\Big \},\;\;\mathcal{E}\subseteq\mathcal{V}\times\mathcal{V} \nonumber
\end{equation} is the set of edges corresponding to the allowable communication between the agents as shown in Fig.~\ref{nodegraph}. The edge $(i,j)$ denotes information exchange between agent $i$ and agent $j$. 
\begin{figure}
\centering
\includegraphics[scale=0.37]{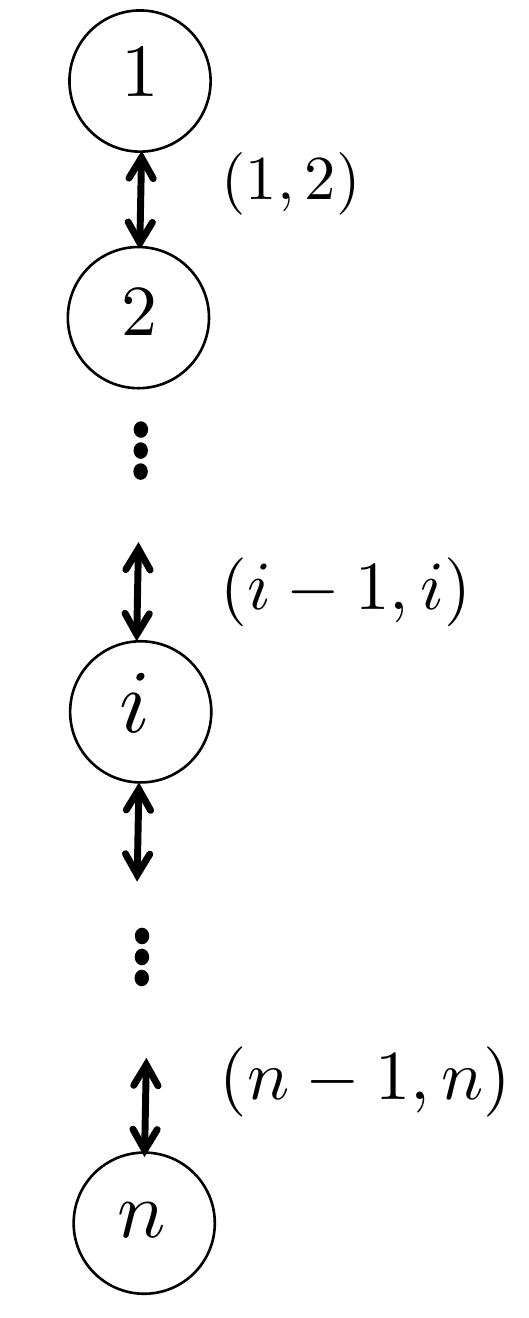}
\caption{Graph $\mathcal{G}=(\mathcal{V},\mathcal{E})$ denoting the information exchange between the agents}
\label{nodegraph}
\vspace{-6.5mm}
\end{figure}
Without loss of generality we assign as leader $l$ the agent $l\triangleq 1$. Observe that, each agent exchanges information with one neighboring agent as shown in Fig.~\ref{nodegraph}. Now, let $\bar{\mathcal{V}}$ denote the set of agents without the leader.  In our case, $\bar{\mathcal{V}}\triangleq\{2,...,n\}$ where $\bar{\mathcal{V}}\subset \mathcal{V}$. In the above set-up, consider the case where the agents desire to reach consensus on a relevant state-variable $z_i,\;i\in\mathcal{V}$. The only constraint is that the sum of their consensus states has to be $z^{*}$. For addressing this problem under the above set-up, we propose the following \textit{Leader-follower Consensus Protocol} \cite{selforga}.
\vspace{-2mm}
\begin{subequations}
\begin{align}
\intertext{\textbf{Protocol $\mathcal{P}_1$}}
\intertext{\textit{Leader agent}}
\frac{d\xi_h}{dt}&\triangleq\Big(z^*-\sum_{i\in \mathcal{V}} z_i\Big),\hspace{9mm}\xi_h\in\mathbb{R}\label{consensus_equations1}\\
\frac{dz_l}{dt}&\triangleq-k_{\alpha,l}(z_l-\xi_h)\;,\hspace{5mm}  \label{consensus_equations2}\\
z_l&\triangleq z_1,\nonumber\hspace{27mm} z_l\in\mathbb{R}
\intertext{\textit{Agent} $i$}
\frac{dz_i}{dt}&\triangleq -k_{\alpha,i}(z_i-z_{i-1})\;,\;\;\;\;\;z_i\in\mathbb{R},\;i\in\bar{\mathcal{V}} \label{consensus_equations3}
\end{align}
\end{subequations}
 The variables $k_{\alpha,l},k_{\alpha,i}\in\mathbb{R}_{++}$ are the gains of the consensus protocol, $\xi_h$ is the auxiliary state of the leader agent and $z_l$ its consensus state, whereas,  $z_i,\;\forall i \in\mathcal{\bar{V}}$, the consensus states of all the other agents. 
 Let the following assumption hold.
  \begin{assumption}
\textit{The information $\sum_{i\in\bar{\mathcal{V}}}z_i$ can be retrieved by the leader agent $l$.}
 \label{pmtotal}
 \end{assumption}
The above information can be obtained by the leader agent through indirect information passing from each agent to the leader (even without direct connection).
\section{Stability Analysis of the Leader-Follower Consensus Protocol}
\subsection{Two-time-scales Property}
 In the protocol $\mathcal{P}_1$ dynamics a two-time-scales property emerges for certain values of the gains $k_{\alpha,l},k_{\alpha,i}$.  To prove that, we show that the system given by equations \eqref{consensus_equations1}-\eqref{consensus_equations3} attains a \textit{standard singular perturbed form} \cite{khalil}. Let the vectors $\matr{z}$ and $\tilde{\matr{z}}$ be defined as
 \begin{align*}
 \matr{z}&\triangleq[z_2\; ... \;z_{n}]^\top,\;\;\matr{z}\in \mathbb{R}^{n-1}\\
  \tilde{\matr{z}}&\triangleq[z_l \; \matr{z}^\top]^\top, \hspace{6mm}\tilde{\matr{z}}\in\mathbb{R}^n
\end{align*}   respectively. Then, the following proposition holds.   
  \begin{proposition} \textnormal{The protocol $\mathcal{P}_1$ given by Eq.~\eqref{consensus_equations1}-\eqref{consensus_equations3} acquires a  two-time-scales property for $k_{a,l}=k_{a,i}\gg 1,\;\forall i$.}
  \end{proposition}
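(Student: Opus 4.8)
The plan is to rewrite the closed-loop dynamics of Protocol $\mathcal{P}_1$ in the standard singularly perturbed form of Khalil \cite{khalil}, namely $\dot{x} = f(x,z,\varepsilon)$, $\varepsilon\dot{z} = g(x,z,\varepsilon)$, and then invoke Proposition~1's hypothesis $k_{a,l}=k_{a,i}\triangleq k \gg 1$ to identify the small parameter as $\varepsilon \triangleq 1/k$. The slow variable is the leader's integral state $\xi_h$, whose dynamics \eqref{consensus_equations1} do not contain $k$ at all, and the fast variables are the consensus states $\tilde{\matr{z}} = [z_l\;\matr{z}^\top]^\top$, whose dynamics \eqref{consensus_equations2}--\eqref{consensus_equations3} are each premultiplied by $k$.

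Concretely, I would first collect \eqref{consensus_equations2}--\eqref{consensus_equations3} into the vector form $\dot{\tilde{\matr{z}}} = -k\,\matr{M}\tilde{\matr{z}} + k\,\matr{b}\,\xi_h$, where $\matr{M}$ is the (lower-bidiagonal) matrix encoding the leader-follower coupling $z_l\mapsto z_2\mapsto\cdots\mapsto z_n$ together with the $z_l$-self-term from \eqref{consensus_equations2}, and $\matr{b}=[1\;0\;\cdots\;0]^\top$ carries the forcing from $\xi_h$ into the leader equation. Substituting $k=1/\varepsilon$ and multiplying through by $\varepsilon$ puts this in exactly the form $\varepsilon\,\dot{\tilde{\matr{z}}} = -\matr{M}\tilde{\matr{z}} + \matr{b}\,\xi_h \triangleq g(\xi_h,\tilde{\matr{z}})$, while \eqref{consensus_equations1} reads $\dot{\xi}_h = z^* - \matr{1}^\top\tilde{\matr{z}} \triangleq f(\xi_h,\tilde{\matr{z}})$, which is the sought standard form; this establishes the two-time-scales structure for $k\gg1$, i.e. $\varepsilon\ll1$.

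To make the singular-perturbation claim meaningful (and to set up the stability analysis that follows), I would also check the nondegeneracy condition of Tikhonov's theorem: the quasi-steady-state equation $0 = -\matr{M}\tilde{\matr{z}} + \matr{b}\,\xi_h$ must have an isolated root $\tilde{\matr{z}} = h(\xi_h)$. Here $\matr{M}$ is lower triangular with strictly positive diagonal entries (each equal to the common gain's normalized value, hence nonzero), so $\matr{M}$ is invertible and the unique root is $h(\xi_h) = \matr{M}^{-1}\matr{b}\,\xi_h$; one reads off directly that this forces $z_l = \xi_h$ and $z_i = \xi_h$ for all followers, i.e. the boundary-layer system drives all consensus states to the leader's auxiliary state. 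The boundary-layer Jacobian $\partial g/\partial\tilde{\matr{z}} = -\matr{M}$ is Hurwitz (triangular, negative diagonal), so the fast subsystem is exponentially stable uniformly in $\xi_h$, as required.

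The main obstacle is less a computational one than a modeling one: showing that the premultiplying factor really is a single scalar $k$ common to every equation so that the system is in \emph{standard} (rather than multiparameter or nonstandard) singularly perturbed form — this is exactly why Proposition~1 imposes $k_{a,l}=k_{a,i}$ for all $i$. If the gains were merely of the same order but unequal, one would instead get $\varepsilon\,\dot{\tilde{\matr{z}}} = -\matr{D}\matr{M}\tilde{\matr{z}} + \matr{D}\matr{b}\,\xi_h$ with $\matr{D}=\diag[k_{a,i}/k]$ bounded, and one would need to argue that $\matr{D}\matr{M}$ remains Hurwitz — true here by triangularity, but a point that must be handled with care. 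I would therefore state the result under the equal-gain hypothesis as in the proposition, verify the scalar factorization explicitly, and note in a remark that the triangular structure of $\matr{M}$ makes the nondegeneracy and Hurwitz conditions automatic, deferring the full Tikhonov convergence argument to the stability proof in the next subsection.
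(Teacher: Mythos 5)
Your proposal is correct and follows essentially the same route as the paper: setting $\varepsilon = 1/k_{\alpha,l}=1/k_{\alpha,i}$ under the equal-gain hypothesis and rewriting \eqref{consensus_equations1}--\eqref{consensus_equations3} with $\xi_h$ as the slow state and $\tilde{\matr{z}}$ as the fast states so that the system is in the standard singularly perturbed form with time scales $t$ and $\tau=t/\varepsilon$. Your additional verification of the isolated quasi-steady-state root and the Hurwitz boundary-layer Jacobian $-\matr{M}$ is not part of the paper's proof of this proposition, but it matches what the paper establishes separately in the subsequent stability analysis (its Lemma~\ref{lemmafast} uses exactly the lower-triangularity argument).
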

  \begin{proof}
Without loss of generality we consider that $k_{\alpha,l}=k_{\alpha,i},\;\;\forall i\in\bar{\mathcal{V}}$. Accordingly, we define the time-scale separation parameters as $\varepsilon=\frac{1}{k_{\alpha,l}}=\frac{1}{k_{\alpha,i}} \in \mathbb{R},\; \forall i\in\mathcal{V}$. With those,  equations \eqref{consensus_equations1}-\eqref{consensus_equations3} become:
 \begin{subequations}
\begin{align}
\frac{d\xi_h}{dt}&\triangleq\Big(z^*-\sum_{i\in\mathcal{V}} z_i)\label{consensus_equations7}\\
\varepsilon\frac{dz_l}{dt}&\triangleq-(z_l-\xi_h)\;,\;\; l\triangleq 1\label{consensus_equations8}\\
\varepsilon\frac{dz_i}{dt}&\triangleq-(z_i-z_{i-1})\;,\;\;i\in\bar{\mathcal{V}}\label{consensus_equations9} 
\end{align}
\end{subequations}
Then, equations \eqref{consensus_equations7}-\eqref{consensus_equations9} can be written in the next form:
  \label{singpert}
  \begin{subequations}
  \begin{align}
\frac{d{\xi}_h}{dt}&\triangleq g_h
\label{standard_sing_pertu1}
\\\varepsilon \frac{\matr{d\tilde{z}}}{\matr{d}\bm{t}}&\triangleq\matr{\tilde{g}}
\label{standard_sing_pertu2}
  \intertext{where}  
  \frac{\matr{d{\tilde{z}}}}{\matr{d}\bm{t}}&\triangleq[\frac{d{z}_l}{dt}\;...\;\frac{d{z}_i}{dt}\; ...\; \frac{d{z}_n}{dt}]^\top\; \label{dzdt}\\
g_h&\triangleq (z^*-\sum_{i\in\mathcal{V}} z_i) \in \mathbb{R}\label{gh}\\
  \matr{\tilde{g}}&\triangleq-[(z_l-\xi_h)...(z_n-z_{n-1})]^\top\label{gmatr}
  \end{align}
  \end{subequations}
 with $\frac{\matr{d{\tilde{z}}}}{\matr{d}t},\matr{\tilde{g}} \in \mathbb{R}^{n}$ being smooth vector fields and  $\varepsilon$ the corresponding small positive parameter. The equations \eqref{standard_sing_pertu1},\eqref{standard_sing_pertu2} are in the standard singularly perturbed form with two time scales, namely $t$ and $\tau=\frac{t}{\varepsilon}$. That, completes the proof.
  \end{proof} 
\subsection{Asymptotic Stability}
In this section, we rely on the two-time-scales property stated above  and singular perturbation theory  for proving asymptotic stability of the equilibrium point of $\mathcal{P}_1$ in a compartmental fashion. The above analysis reveals that the \textit{slow quasi-steady state} is the augmented state-variable $\xi_h$ of the leader agent $l$, and the \textit{fast boundary-layer states}, are the consensus state-variables $\matr{\tilde{z}}$.  We first establish the stability of the equilibrium point of the fast sub-system and of the equilibrium point of the slow sub-system, when these systems are considered decoupled from each other. Succeeding that step, we use \textit{Theorem 11.4} \cite{khalil} to directly deduce that a maximum bound $\varepsilon^*$ on the small parameter $\varepsilon$ exists under which stability of the full coupled system is guaranteed. The \textit{equilibrium point of} \eqref{consensus_equations7}-\eqref{consensus_equations9} is:
\begin{subequations}
\begin{align}
 \xi_{h0}&=z^*/n\vspace{-2mm}\label{equil1}\\
z_{l0}&=\xi_{h0}\label{equil2}\\
z_{i0}&=\xi_{h0},\;\; \forall i\in\bar{\mathcal{V}}\label{equil3}
\end{align}  
\end{subequations}
Define the  \textit{consensus subspace} as:
 \begin{equation}
 \mathcal{S}\triangleq \{\tilde{\matr{z}}\in \mathbb{R}^{n}\; |\;\tilde{\matr{z}}=\beta \cdot\matr{1}_{n\times 1}\;,\;\; \beta \in \mathbb{R} \}
 \end{equation}  
and let $\tilde{\matr{z}}_0$ denote the equilibrium point of $\tilde{\matr{z}}$. 
 Notice that, consensus between the agent state-variables $z_i,\;i\in\mathcal{V}$ can be reached, whenever the next two conditions are satisfied:
\begin{cond} 
\label{cond1}
  $\tilde{\matr{z}}_0\in\mathcal{S}$
 \end{cond}
 \vspace{-1mm}
  \begin{cond}
 \label{cond2}
  $ \tilde{\matr{z}}_0$ \text{\;\;is \textit{asymptotically stable}}
  \end{cond}\vspace{1mm}
 Observing equations \eqref{equil1}-\eqref{equil3}, we readily have that \textit{Condition}~\ref{cond1} is met with $\beta=\frac{z^*}{n}$. 
We are left to establish that \textit{Condition}~\ref{cond2} is met.  For this purpose, we define the augmented variables $y_l\triangleq (z_l-\xi_h), y_i\triangleq (z_i-\xi_h), \psi_h\triangleq(\xi_h-\xi_{h0})$ that move the equilibrium point $(\xi_{h0},\tilde{\matr{z}}^\top_0)$ to the origin. Using those, equations \eqref{consensus_equations7}-\eqref{consensus_equations9} become:
\vspace{-2mm}
\begin{subequations}
\begin{align}
\intertext{\textit{Slow sub-system}}
 \frac{d{\psi}_h}{dt}&\triangleq -n\psi_h-\sum_{i\in\mathcal{V}} y_i
 \label{consensus_equations11}\\
\intertext{\textit{Fast boundary-layer sub-system}}
\varepsilon \frac{d{y}_l}{dt}&\triangleq -y_l-\varepsilon \frac{d\psi_h}{dt} ,\;\;y_l\triangleq y_1
 \label{consensus_equations12}\\
\varepsilon \frac{d{y}_i}{dt}&\triangleq -(y_i-y_{i-1})-\varepsilon \frac{d\psi_h}{dt},\;\;\;\forall  i\in\bar{\mathcal{V}}
 \label{consensus_equations13}
\end{align}
\end{subequations}
Define the following vectors;  $\matr{y}=[y_2 \; ... \; y_i \;... \; y_{n}]^\top,\; \matr{y}\in \mathbb{R}^{n-1}$ and  $\matr{\tilde{y}}=[y_l \;\matr{y}^\top]^\top,\; \matr{\tilde{y}}\in \mathbb{R}^{n}$ and let the variable $\tau=\frac{t}{\varepsilon}$ denote the fast time-scale. On this time scale, the slow state-variable $\psi_h$ appears as ``\textit{frozen}'' and therefore we can assume that $\frac{d\psi_h}{d\tau}\approx 0$.
Under this approximation, the fast decoupled sub-system becomes:
\begin{subequations}
\begin{align}
\frac{d{y}_l}{d\tau}&=-y_l
 \label{fastman1}\\
\frac{d{y}_i}{d\tau}&=-(y_i-y_{i-1}),\;\;\;\forall  i\in\bar{\mathcal{V}}
 \label{fastman2}
\end{align}
\end{subequations}
The following lemma establishes asymptotic stability of the fast sub-system equilibrium point .
\begin{lemma}
The equilibrium point of the system  \eqref{fastman1}-\eqref{fastman2},  $(\matr{\tilde{y}}_0=\matr{0}_{n\times 1})$, is \textit{asymptotically stable}.
\label{lemmafast}
\end{lemma}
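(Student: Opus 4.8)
The plan is to observe that \eqref{fastman1}--\eqref{fastman2} is an autonomous linear system $\frac{d\matr{\tilde{y}}}{d\tau}=\matr{A}\matr{\tilde{y}}$ and to show that the coefficient matrix $\matr{A}$ is Hurwitz, from which asymptotic stability of $\matr{\tilde{y}}_0=\matr{0}_{n\times 1}$ follows by the classical result on linear time-invariant systems (e.g.\ \cite{khalil}). First I would stack the states in the order $(y_l,y_2,\dots,y_n)$ and read off $\matr{A}$ from $\dot{y}_l=-y_l$ and $\dot{y}_i=-y_i+y_{i-1}$, $i\in\bar{\mathcal{V}}$: the matrix $\matr{A}$ is lower bidiagonal, with every diagonal entry equal to $-1$ and every subdiagonal entry equal to $+1$. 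Since a triangular matrix has its spectrum on the diagonal, $\sigma(\matr{A})=\{-1\}$, so $\operatorname{Re}(\lambda)=-1<0$ for all $\lambda\in\sigma(\matr{A})$; hence $\matr{A}$ is Hurwitz and the origin is globally asymptotically (indeed exponentially) stable. That would complete the proof.

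As a cross-check I would also record a Lyapunov argument: take $V(\matr{\tilde{y}})=\tfrac12\matr{\tilde{y}}^\top\matr{\tilde{y}}$, so that along trajectories $\dot V=\tfrac12\matr{\tilde{y}}^\top(\matr{A}+\matr{A}^\top)\matr{\tilde{y}}$; here $\matr{A}+\matr{A}^\top$ is the tridiagonal matrix $-2\matr{I}_n$ plus the adjacency matrix of the path $1\!-\!2\!-\!\cdots\!-\!n$, whose eigenvalues $2\cos\!\big(k\pi/(n+1)\big)$, $k=1,\dots,n$, lie strictly in $(-2,2)$; therefore $\matr{A}+\matr{A}^\top\prec0$ and $\dot V<0$ for $\matr{\tilde{y}}\neq\matr{0}_{n\times1}$, which again yields the claim. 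A third, perhaps most transparent, route is a cascade argument: $y_l(\tau)=y_l(0)e^{-\tau}\to0$, and then each $y_i$ solves the scalar stable equation $\dot{y}_i=-y_i+y_{i-1}(\tau)$ driven by an exponentially vanishing input, so $y_i\to0$ by induction on $i\in\bar{\mathcal{V}}$.

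There is no genuine obstacle here; the only points requiring care are bookkeeping ones. One must keep the index convention straight — the leader carries index $1$ while $\bar{\mathcal{V}}=\{2,\dots,n\}$ — so that the triangular structure of $\matr{A}$, and hence its spectrum, is correctly identified. One should also make explicit that \eqref{fastman1}--\eqref{fastman2} already incorporates the ``frozen'' approximation $\frac{d\psi_h}{d\tau}\approx0$, so that the lemma concerns exactly the reduced boundary-layer model that \textit{Theorem 11.4} of \cite{khalil} requires in the subsequent step.
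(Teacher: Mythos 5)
Your primary argument is exactly the paper's: write \eqref{fastman1}--\eqref{fastman2} as $\frac{d\matr{\tilde y}}{d\tau}=\matr{A_f}\matr{\tilde y}$ with $\matr{A_f}$ lower (bi)diagonal, read off $\sigma(\matr{A_f})=\{-1\}$ from the triangular structure, and conclude that $\matr{A_f}$ is Hurwitz so the origin is asymptotically stable. The additional Lyapunov and cascade cross-checks are correct but not needed; the proposal matches the paper's proof.
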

\begin{proof}
In matrix form we have:
\begin{center}
$\begin{bmatrix}
\frac{d{y}_l}{d\tau}\\
\vdots\\
\frac{d{y}_{n}}{d\tau}\\
\end{bmatrix}
$=$\underbrace{\begin{bmatrix}
-1 & 0 & \cdots & 0 & 0 \\
\vdots & \ddots  &  &   & \vdots \\
0 & 0 & \cdots & 1 & -1 \\
\end{bmatrix}}_{{\Huge \matr{A_f}}} \begin{bmatrix}
y_l\\
\vdots\\
y_{n}\\
\end{bmatrix}
$
\end{center}
Denoting the eigenvalues of $\matr{A_f}$ with $\pmb{\lambda}=[\lambda_l\;\lambda_2\;\cdots\;\lambda_{n}]^\top$ and noticing that $\matr{A_f}$ is a lower triangular matrix \cite{matrixhorn}, gives that 
 $\pmb{\lambda}=-\matr{1}_{n\times 1}$. That is equivalent to $\matr{A_f}$ being a  \textit{Hurwitz matrix} and to the origin of the system \eqref{fastman1}, \eqref{fastman2} being \textit{asymptotically stable} \cite{khalil}.
That, concludes the proof of \textit{Lemma}~\ref{lemmafast}.
 \end{proof} 
  According to \textit{Theorem 4.6} \cite{khalil} the above statements are valid if and only if for any $\matr{Q}\succ 0$, $\exists \matr{P} \succ 0$ that satisfies:
\begin{equation}
\matr{PA_f+A_f^\top P=-Q},\;\;\;\;\matr{Q},\; \matr{P}\in\mathbb{R}^{n\times n}
\label{lyapunovequation}
\end{equation}   
with stability certificate the Lyapunov function given by:
\begin{align}
V_f=\matr{\tilde{y}}^\top \matr{P} \matr{\tilde{y}}
\end{align}
 We now turn our focus on the slow time scale $t$ and notice that \textit{Lemma}~\ref{lemmafast} establishes  $\lim_{\tau\to\infty} (y_l)=0,\;\lim_{\tau\to\infty}(y_i)= 0,\; \forall i\in\mathcal{\bar{V}}$. With these, the slow sub-system \eqref{slowsubsys} can be approximated by:
\begin{align}
 \frac{d{\psi}_h}{dt}=-n\psi_h
\label{slowsubsys}
\end{align}
 
\begin{lemma}
The equilibrium point of the slow sub-system \eqref{slowsubsys}, ($\psi_{h0}=0$), is \textit{asymptotically stable}.
\label{lemmaslow}
\end{lemma}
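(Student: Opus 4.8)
The plan is to recognize that the reduced slow dynamics \eqref{slowsubsys} form a scalar linear time-invariant system whose ``system matrix'' is the negative scalar $-n$, so that asymptotic stability follows from the elementary Lyapunov theory for linear systems, exactly mirroring the argument already used for the fast sub-system in \textit{Lemma}~\ref{lemmafast}.

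First I would note that the reduction leading to \eqref{slowsubsys} is legitimate: by \textit{Lemma}~\ref{lemmafast} the boundary-layer states satisfy $\lim_{\tau\to\infty}y_l=0$ and $\lim_{\tau\to\infty}y_i=0$ for all $i\in\bar{\mathcal{V}}$, so substituting these quasi-steady-state values into \eqref{consensus_equations11} yields precisely $\dot\psi_h=-n\psi_h$. Since $n\geq 2$, the unique eigenvalue of this system is $\lambda=-n<0$, i.e.\ $-n$ is a Hurwitz ``matrix''.

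Second, I would invoke \textit{Theorem 4.6}~\cite{khalil}: for any $q>0$ there exists $p>0$ solving the scalar Lyapunov equation $-2np=-q$, namely $p=q/(2n)$. Taking the Lyapunov candidate $V_s=p\,\psi_h^2$, which is positive definite and radially unbounded, gives $\dot V_s=2p\psi_h\dot\psi_h=-2np\,\psi_h^2=-q\,\psi_h^2<0$ for $\psi_h\neq 0$, hence $\dot V_s$ is negative definite and the origin $\psi_{h0}=0$ is (globally) asymptotically stable. As a sanity check one may simply integrate \eqref{slowsubsys} to obtain $\psi_h(t)=\psi_h(0)e^{-nt}\to 0$.

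I do not anticipate any genuine obstacle here, since the reduced slow model is one-dimensional and linear; the only point requiring care is the one handled above, namely justifying the passage from \eqref{consensus_equations11} to \eqref{slowsubsys} by means of \textit{Lemma}~\ref{lemmafast}. With \textit{Lemma}~\ref{lemmafast} and the present lemma established, the final step of the overall argument would be to appeal to \textit{Theorem 11.4}~\cite{khalil} to conclude that there exists $\varepsilon^{*}>0$ such that the full coupled system \eqref{consensus_equations7}-\eqref{consensus_equations9} is asymptotically stable for all $\varepsilon\in(0,\varepsilon^{*})$, which together with \textit{Condition}~\ref{cond1} shows that consensus among the agents is reached.
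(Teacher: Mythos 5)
Your proposal is correct and follows essentially the same route as the paper: the paper also uses the quadratic Lyapunov function $V_h=\psi_h^2$ for the scalar system $\dot\psi_h=-n\psi_h$, computes $\dot V_h=-2n\psi_h^2<0$, and concludes asymptotic stability of $\psi_{h0}=0$. Your scaled candidate $V_s=p\,\psi_h^2$ with $p=q/(2n)$ and the explicit solution $\psi_h(t)=\psi_h(0)e^{-nt}$ are just equivalent restatements of that same argument.
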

\begin{proof}
 Take a candidate Lyapunov function as:
 \begin{equation}
 V_h=\psi_h^2,\;\;\; V_h>0, \;\;\;\forall \psi_h \in D_{\psi_h}\setminus\{0\}
\end{equation}  
Computing its time derivative along the trajectories of the system in \eqref{slowsubsys} leads to $\frac{dV_h}{dt}=-2n(\psi_h^2)$ with $n>0$ and  $\frac{dV_h}{dt}<0,\;\;\forall \psi_h \in D_{\psi_h}\setminus\{0\}$ which finally yields that $\psi_{h0}=0$ is \textit{asymptotically stable}.
\end{proof}
 \textit{Lemmas~\ref{lemmafast}, \ref{lemmaslow}} establish \textit{asymptotic stability} of the equilibrium points of the fast and the slow reduced sub-systems respectively. Nonetheless, asymptotic stability of the full system \eqref{consensus_equations11}-\eqref{consensus_equations13} equilibrium point  cannot be seamlessly inferred since the decoupled system dynamics serve as an approximation of the full system dynamics. To grapple with this challenge, we resort to a methodology, described in \cite{khalil}, whose sketch we briefly provide next. To prove asymptotic stability of the equilibrium point of the full system, a candidate Lyapunov function $V_c$ can be first formulated from the linear combination of the Lyapunov functions $V_f$ and $V_h$. Further, by deriving the time derivative of this composite candidate Lyapunov function along the trajectories of the full system we can impose negative definetess conditions in it. That, will lead a maximum bound $\varepsilon^*$ (Theorem 11.4, \cite{khalil}) on the small parameter $\varepsilon$ which can be interpreted as the maximum value of the time-scale separation ratio for which asymptotic stability of the full system is still assured.
\subsection{Delay-Independent Asymptotic Stability} 
\label{delayindepend}
In practice, when distributed protocols are implemented,  communication delays in the exchange of information arise. In this section, we study the \textit{asymptotic stability} property of the fast sub-system equilibrium point under a specific type of delays, namely the fixed-time delays. Denoting the delays with  $r\in \mathbb{R}^{+}$ we emphasize that, while the time-delays can  be arbitrary, the approximation $\frac{d\psi_h}{d\tau}\approx 0$ has to be valid so that the slow state-variable can still be considered as ``\textit{frozen}''. The time-delayed version of  \eqref{fastman1}-\eqref{fastman2} is:
\begin{equation}
\frac{\matr{d\tilde{y}}}{\matr{d}\tau}=\matr{A}_0\matr{\tilde{y}}+\matr{A}_1\matr{\tilde{y}}(\tau-r) 
\label{timedelaysystem}
\end{equation}
where  $\matr{A}_0 \triangleq -\matr{I}_{n}$ and  $\matr{A}_1$  defined as:
\begin{equation} 
 \matr{A}_1 \triangleq \begin{bmatrix}\matr{0}_{(n-1)\times 1}^\top & 0\\
  \matr{I}_{(n-1)} &\matr{0}_{(n-1)\times 1}\end{bmatrix}
  \end{equation}
We construct a \textit{Lyapunov-Krasovskii} functional \cite{timedelaykhar} as:
\begin{equation}
V_1=\matr{\tilde{y}}(\tau)^\top \matr{P_1} \matr{\tilde{y}}(\tau)+\int_{\tau-r}^{\tau}\matr{\tilde{y}}(\eta)^\top \matr{Q_1} \matr{\tilde{y}}(\eta)d\eta
\end{equation}  
where $\matr{P_1}, \matr{Q_1}\in\mathbb{R}^{n\times n}$ and $\matr{P_1}, \matr{Q_1}\succ 0$. Direct differentiation with respect to $\tau$ yields:
\begin{equation}
\frac{dV_1}{d\tau}=
{
\matr{\tilde{y}}_d^\top
\matr{\tilde{Q}_1}
\tilde{\matr{y}}_{d}
  }
 \end{equation}
 where \begin{equation}
\matr{\tilde{Q}_1}\triangleq \begin{pmatrix}
\matr{P_1A_0+A_0^\top P_1+Q_1} & \matr{P_1 A_1}\\
\matr{A_1^\top P_1} & \matr{-Q_1}
\end{pmatrix}
\label{Q1tilda}
 \end{equation}
and 
\begin{equation}
\tilde{\matr{y}}_{d}^{\top}\triangleq(\matr{\tilde{y}}(\tau)^\top\;\matr{\tilde{y}}(\tau-r)^\top),\; \tilde{\matr{y}}_{d}\in\mathbb{R}^{2n}
\end{equation}
From the \textit{Lyapunov-Krasovskii Stability Theorem} \cite{timedelaykhar}, we  have that the \textit{delay-independent asymptotic stability} of the equilibrium point is guaranteed whenever 
\begin{equation}
\frac{dV_1}{d\tau}<0,\hspace{5mm} \forall \matr{\tilde{y}}_{d}\in\mathcal{D}_{1}\setminus \{\matr{0}_{(2n\times 1)}\}
\label{v1negat}
\end{equation} 
 To prove  \eqref{v1negat}, we introduce the next Lemma.
\begin{lemma}
 $\exists \matr{P_1,Q_1} \succ 0$ diagonal matrices such that for $\matr{\tilde{Q}_1}$, given by \eqref{Q1tilda}, it holds: 
\begin{equation}\matr{\tilde{Q}_1}
  \prec 0
  \label{LMIfeas}
 \end{equation}
 \label{LMIfeaslemma}
\end{lemma}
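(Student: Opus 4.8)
The plan is to exhibit explicit diagonal matrices $\matr{P_1}$ and $\matr{Q_1}$ and verify that the resulting $2n\times 2n$ matrix $\matr{\tilde{Q}_1}$ is negative definite. Since $\matr{A}_0=-\matr{I}_n$, the $(1,1)$ block of $\matr{\tilde{Q}_1}$ becomes $-2\matr{P_1}+\matr{Q_1}$, which is diagonal whenever $\matr{P_1},\matr{Q_1}$ are; the off-diagonal block is $\matr{P_1}\matr{A}_1$, a shifted-identity pattern, and the $(2,2)$ block is $-\matr{Q_1}$. The first step is therefore to try the simplest ansatz $\matr{P_1}=p\,\matr{I}_n$, $\matr{Q_1}=q\,\matr{I}_n$ with $p,q>0$, reducing \eqref{LMIfeas} to a condition on two scalars.

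With that ansatz, the second step is to apply the Schur complement with respect to the (negative-definite) block $-\matr{Q_1}=-q\matr{I}_n$: the condition $\matr{\tilde{Q}_1}\prec 0$ is equivalent to $-q\matr{I}_n\prec 0$ (automatic) together with
\begin{equation}
(-2p+q)\matr{I}_n + \tfrac{p^2}{q}\,\matr{A}_1\matr{A}_1^\top \prec 0 .
\end{equation}
Because $\matr{A}_1$ is the lower shift matrix, $\matr{A}_1\matr{A}_1^\top=\diag(0,1,\dots,1)$, so this reduces to the single scalar inequality $-2p+q+\tfrac{p^2}{q}<0$, i.e. $(p-q)^2 < pq$ after clearing denominators — wait, more precisely $q^2-2pq+p^2<pq$, i.e. $p^2-3pq+q^2<0$. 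The third step is to observe this quadratic in $p/q$ has roots $(3\pm\sqrt{5})/2$, so any ratio $p/q$ strictly between these values works; e.g. $p=q=1$ gives $1-3+1=-1<0$. Hence $\matr{P_1}=\matr{Q_1}=\matr{I}_n$ already certifies \eqref{LMIfeas}, and one records that these are positive-definite diagonal matrices as required.

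I do not anticipate a genuine obstacle here; the only mild subtlety is being careful that the Schur-complement reduction is valid (the pivot block $-\matr{Q_1}$ must be strictly negative definite, which it is for $q>0$) and that $\matr{A}_1\matr{A}_1^\top$ is computed correctly from the shift structure rather than $\matr{A}_1^\top\matr{A}_1$. An alternative, if one prefers to avoid Schur complements, is to test $\matr{\tilde{Q}_1}$ directly on a vector $(\matr{u}^\top,\matr{v}^\top)^\top$ and complete the square in $\matr{v}$; this yields the same scalar condition. Either route closes the lemma, and it is worth remarking that the conclusion is genuinely delay-independent because $r$ never enters $\matr{\tilde{Q}_1}$.
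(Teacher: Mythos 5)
Your Schur-complement reduction with respect to the $(2,2)$ block is set up correctly, and $\matr{A}_1\matr{A}_1^\top=\diag(0,1,\dots,1)$ is right, but the final algebra contains an error that invalidates the conclusion. Multiplying $-2p+q+\tfrac{p^2}{q}<0$ through by $q>0$ gives $p^2-2pq+q^2<0$, i.e.\ $(p-q)^2<0$; there is no $pq$ term on the right-hand side, so the inequality $p^2-3pq+q^2<0$ that you then solve is not the correct one. The correct condition is infeasible for \emph{every} $p,q>0$, so the uniform ansatz $\matr{P_1}=p\matr{I}_n$, $\matr{Q_1}=q\matr{I}_n$ can never certify the lemma. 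Concretely, for your proposed certificate $\matr{P_1}=\matr{Q_1}=\matr{I}_n$, the vector $\matr{v}\in\mathbb{R}^{2n}$ with a $1$ in positions $i+1$ and $n+i$ (and zeros elsewhere) satisfies $\matr{\tilde{Q}_1}\matr{v}=\matr{0}$ for any $1\le i\le n-1$, so $\matr{\tilde{Q}_1}$ is singular and at best negative semidefinite. This is a genuine gap, not a cosmetic one: the lemma really does require non-uniform weights.

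The missing idea is to let the diagonal entries vary along the chain. The paper's proof Schur-complements with respect to the $(1,1)$ block and obtains scalar conditions coupling $q_i$ to $(p_{i+1},q_{i+1})$ (the non-vacuous form of that condition is $-q_i+\tfrac{p_{i+1}^2}{2p_{i+1}-q_{i+1}}<0$), which it satisfies by choosing $q_n$ first and then each $q_i$ backwards in the resulting nonempty interval. Within your own framework the cheapest repair is to take $\matr{P_1}=\matr{Q_1}=\diag(p_1,\dots,p_n)$ with $p_1>p_2>\dots>p_n>0$ (e.g.\ $p_i=2^{-i}$). Then the $(1,1)$ block is $-\matr{P_1}$, and the Schur complement with respect to $-\matr{Q_1}=-\matr{P_1}$ is
\begin{equation*}
-\matr{P_1}+\matr{P_1}\matr{A}_1\matr{P_1}^{-1}\matr{A}_1^\top\matr{P_1}
=\diag\Bigl(-p_1,\;-p_2+\tfrac{p_2^2}{p_1},\;\dots,\;-p_n+\tfrac{p_n^2}{p_{n-1}}\Bigr),
\end{equation*}
which is negative definite precisely because each ratio $p_j/p_{j-1}$ is strictly less than one. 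That closes the lemma; the strict monotonicity of the weights is exactly what the constant choice cannot provide.
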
 
 \begin{proof}
Without loss of generality we take $\matr{P_1,Q_1}$ to be positive definite diagonal matrices. The  Schur complement conditions on the matrix $\matr{\tilde{Q}_1}$, give that $\matr{\tilde{Q}_1}\prec 0$ is equivalent to $\matr{P_1A_0+A_0^\top P_1+Q_1}\prec 0$ together with the Schur complement matrix (of $\matr{\tilde{Q}_1}$), $\matr{S_1}$, satisfying $\matr{S_1}\prec0$.
Performing basic matrix calculations gives:
\begin{align}
\matr{P_1A_0}&\matr{+A_0^\top P_1+Q_1}
=\nonumber\\
&\begin{bmatrix}
q_1-2p_1 & 0 &\cdots & 0\\
 0 & q_i-2p_i &\cdots & 0\\
    0 & 0 & \ddots & 0\\
 0 & 0 &\cdot & q_n-2p_n\\
\end{bmatrix}
\nonumber
\end{align}
which is negative definite ($\prec 0$) when:
\begin{subequations}
\begin{align}
q_i-2p_i<0,\;\;\; \forall i\in\mathcal{V}
\label{cond1ineq}
\end{align}
The Schur complement is obtained as:
\begin{equation}
\matr{S_1}
=\begin{bmatrix}
-q_1-\frac{p_2^2}{2p_2-q_2}& 0 &\cdots & 0\\
   0 &  -q_i-\frac{p_{i+1}^2}{2p_{i+1}-q_{i+1}} &\cdots & 0\\
   0 & 0 & \ddots & 0\\
 0 & 0 &\cdots & -q_n\\
\end{bmatrix}
\nonumber
\end{equation}
which is negative definite ($\prec 0$) when:
\begin{align}
-q_i-\frac{p_{i+1}^2}{2p_{i+1}-q_{i+1}}&<0, \;\;\;\forall i\in \mathcal{V}\setminus \{n\}
\label{cond2ineq1}\\
-q_n&<0
\label{cond2ineq2}
\intertext{Restricting $\matr{P}_1,\;\matr{Q}_1$ to be positive definite yields:}
p_i&>0\;,\;\;\; \forall i\in\mathcal{V}\label{pposd}\\
 q_i&>0\;,\;\;\; \forall i\in\mathcal{V}\label{qposd}
\end{align}
\end{subequations}
Now, we have to prove that we can find $p_i,q_i\;,\forall i\in\mathcal{V}$, that satisfy the inequalities \eqref{cond1ineq}-\eqref{qposd} i.e LMI feasibility. Our proof for the LMI feasibility is constructive; we introduce a methodology for finding $p_i,q_i$ to meet the inequalities. We begin by choosing  $p_i,\;\forall i\in\mathcal{V}$ s.t \eqref{pposd} holds. Next, choose the $q_n$ s.t \eqref{qposd},\eqref{cond1ineq} together hold and each $q_i,\;\forall i\in \mathcal{V} \setminus\{n\}$ s.t both \eqref{cond1ineq}, \eqref{cond2ineq1} are satisfied. Using this methodology, we can always find $p_i,q_i\;,\forall i\in\mathcal{V}$ that meet the inequalities \eqref{cond1ineq}-\eqref{qposd}. Hence, the LMI problem in \eqref{LMIfeas} is indeed feasible and we conclude the proof. \end{proof} 

 \begin{theorem} The equilibrium point ($\matr{\tilde{y}}_0=\matr{0}_{n\times 1}$) of the system \eqref{timedelaysystem} is  delay-independent asymptotically stable.
 \label{thmtimedelays}
 \end{theorem}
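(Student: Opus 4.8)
The plan is to certify delay-independent asymptotic stability directly from the Lyapunov--Krasovskii functional $V_1$ already introduced, by combining its computed derivative with the feasibility result of \textit{Lemma}~\ref{LMIfeaslemma}.

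First I would observe that, since $\matr{P_1}\succ0$ and $\matr{Q_1}\succ0$, the functional $V_1$ is positive definite and admits standard lower and upper bounds of the form $\alpha_1\|\matr{\tilde{y}}(\tau)\|_2^2\le V_1\le \alpha_2\sup_{\eta\in[\tau-r,\tau]}\|\matr{\tilde{y}}(\eta)\|_2^2$ for suitable $\alpha_1,\alpha_2>0$; hence it is an admissible candidate for the \textit{Lyapunov--Krasovskii Stability Theorem} \cite{timedelaykhar}. Next, differentiating $V_1$ along the trajectories of \eqref{timedelaysystem} and grouping the boundary contributions of the integral term with the quadratic part produces exactly $\frac{dV_1}{d\tau}=\matr{\tilde{y}}_d^\top\matr{\tilde{Q}_1}\matr{\tilde{y}}_d$, with $\matr{\tilde{Q}_1}$ the symmetric matrix in \eqref{Q1tilda}; this is the computation already recorded just above the statement.

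The core step is then to invoke \textit{Lemma}~\ref{LMIfeaslemma}, which guarantees diagonal matrices $\matr{P_1},\matr{Q_1}\succ0$ such that $\matr{\tilde{Q}_1}\prec0$. Fixing such a pair yields $\frac{dV_1}{d\tau}<0$ for every $\matr{\tilde{y}}_d\in\mathbb{R}^{2n}\setminus\{\matr{0}_{2n\times1}\}$, i.e.\ \eqref{v1negat} holds. Crucially, neither $\matr{\tilde{Q}_1}$ nor the scalar inequalities \eqref{cond1ineq}--\eqref{qposd} that make it negative definite involve the delay $r$, so the same $\matr{P_1},\matr{Q_1}$ serve simultaneously for every $r\in\mathbb{R}^{+}$. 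Applying the \textit{Lyapunov--Krasovskii Stability Theorem} with $V_1$ and this negativity bound then gives asymptotic stability of $\matr{\tilde{y}}_0=\matr{0}_{n\times1}$ for all $r$, that is, delay-independent asymptotic stability.

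I expect the only delicate point to be the differentiation step: one must check that the Leibniz-rule term $\matr{\tilde{y}}(\tau)^\top\matr{Q_1}\matr{\tilde{y}}(\tau)-\matr{\tilde{y}}(\tau-r)^\top\matr{Q_1}\matr{\tilde{y}}(\tau-r)$ arising from the integral combines with the $\matr{A}_0$ and $\matr{A}_1$ contributions to give precisely the block form $\matr{\tilde{Q}_1}$, and that the resulting inequality is a genuine functional bound (valid along any admissible solution segment) rather than a merely pointwise one. Once $\matr{\tilde{Q}_1}\prec0$ is secured this is routine, and the fact that the whole construction is insensitive to $r$ supplies the ``delay-independent'' qualifier at no extra cost, completing the proof.
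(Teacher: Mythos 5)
Your proposal is correct and follows essentially the same route as the paper: the paper's proof is simply the one-line invocation of \textit{Lemma}~\ref{LMIfeaslemma} together with the \textit{Lyapunov--Krasovskii Stability Theorem}, applied to the functional $V_1$ whose derivative $\matr{\tilde{y}}_d^\top\matr{\tilde{Q}_1}\matr{\tilde{y}}_d$ was already computed in the text. Your additional remarks (the quadratic bounds on $V_1$ and the observation that $\matr{\tilde{Q}_1}$ does not depend on $r$) merely make explicit what the paper leaves implicit.
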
 
\begin{proof}
From Lemma~\ref{LMIfeaslemma} and \textit{Lyapunov-Krasovskii Stability Theorem} \cite{timedelaykhar} we  conclude the proof.
\end{proof}

Intuitively, Theorem~\ref{thmtimedelays} guarantees that the fast consensus state-variables $z_i,\;i\in\mathcal{V}$ converge to the slow state-variable $\xi_h$ independently of any time-delays, as long as $\psi_h$ remains ``\textit{frozen}''. That is of practical interest, since it certifies that the performance of the protocol $\mathcal{P}_1$ is robust with respect to  time-delays that are inherent in communication channels.

\begin{figure}
       \begin{subfigure}{0.37\textwidth}
   \includegraphics[scale=0.36]{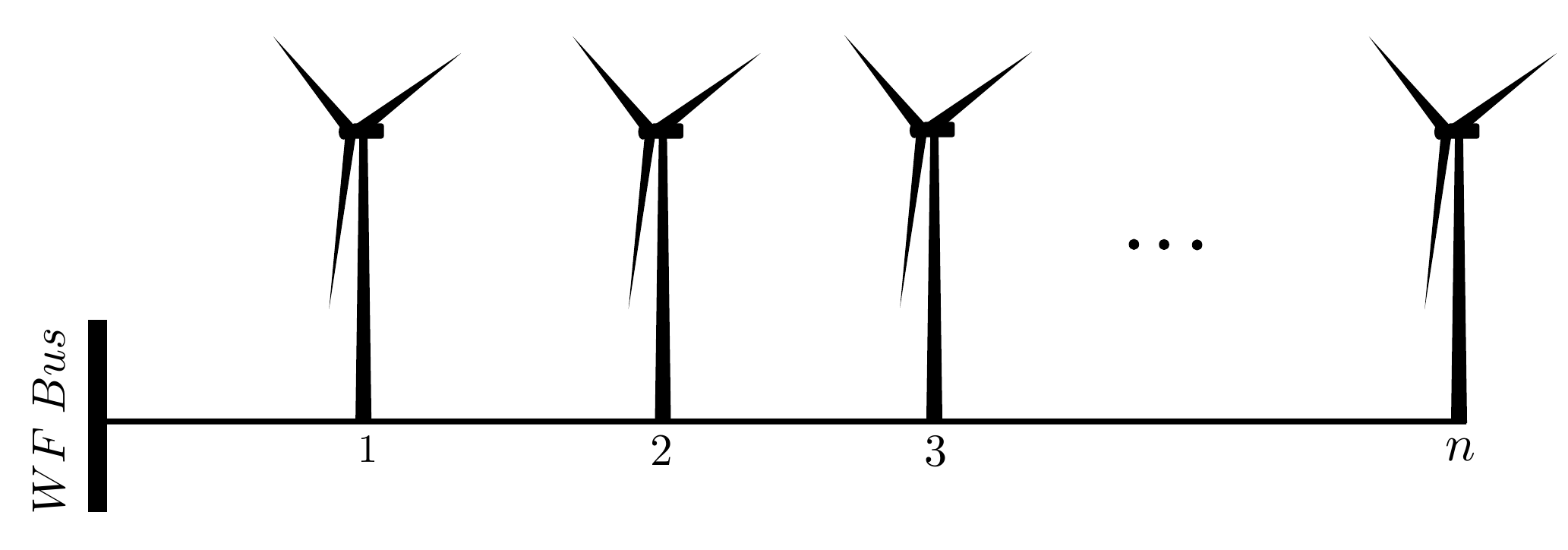}
        \caption{}
        \label{physicalcon}
    \end{subfigure}\\
    \begin{subfigure}{0.36\textwidth}    
   \includegraphics[scale=0.34]{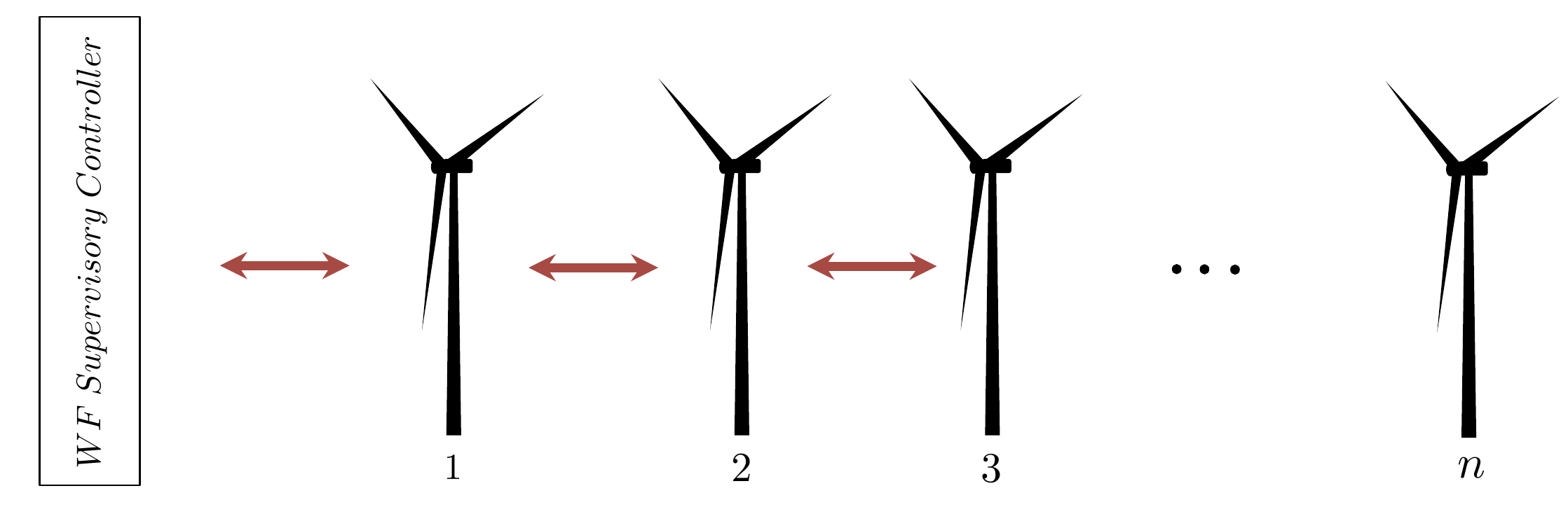}
        \caption{} 
        \label{commcon}
    \end{subfigure}
        \label{physcomtop}
         \begin{subfigure}{0.55\textwidth}    
   \includegraphics[scale=0.50]{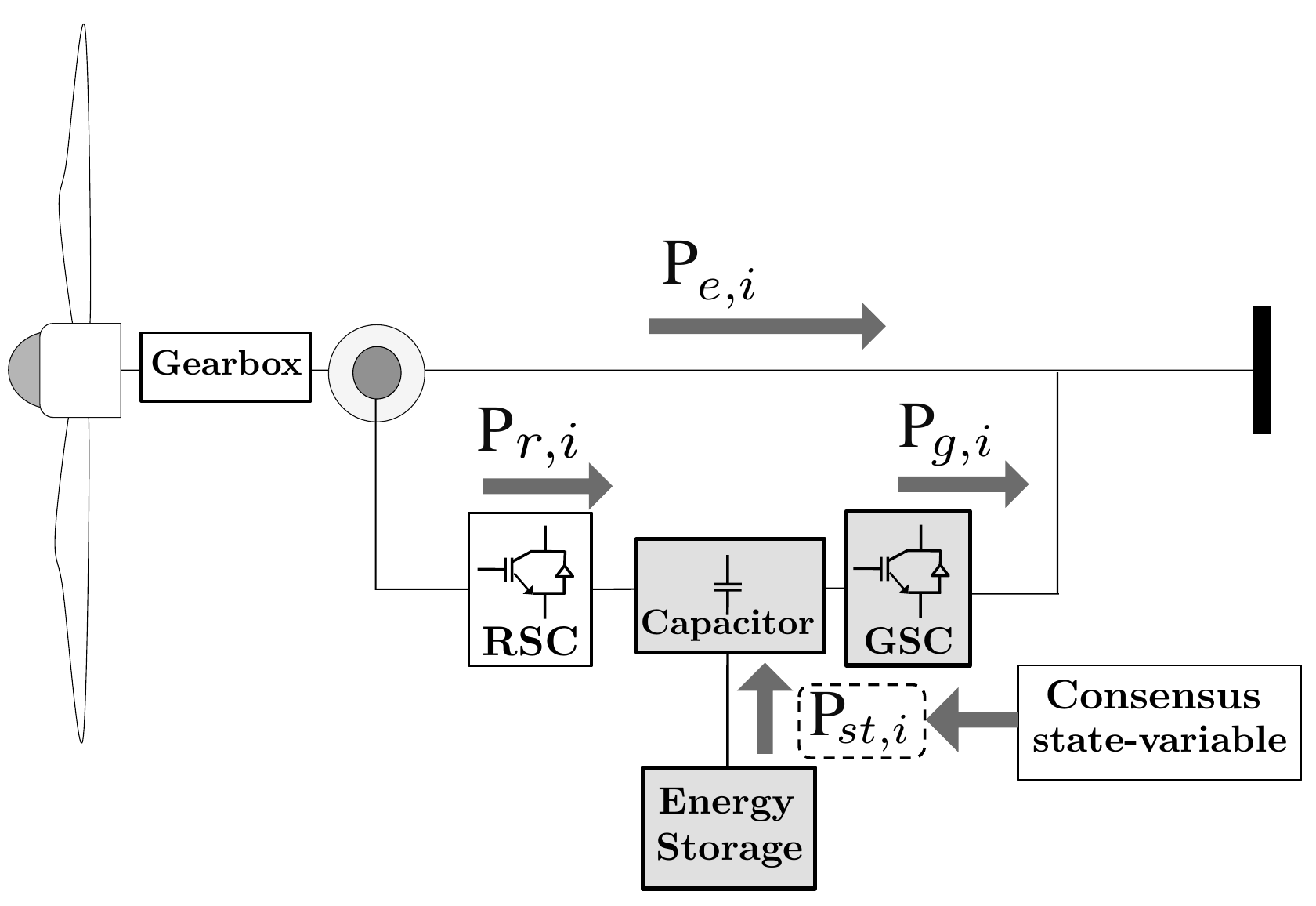}
        \caption{} 
           \label{DFIGwithstor}
    \end{subfigure}
         \caption{a) Physical topology of the WF\; b) Communication links \; c) WG with integrated storage}
        \label{physcomtop}
        \vspace{-7.5mm}
\end{figure}

\section{WF Short-term Predictable Power Via Distributed Control of the WGs}

In this Section, we apply the proposed protocol for coordinating a group of WGs storage devices in a WF toward achieving a common objective. The group objective is short-term predictable power from the WF i.e provide constant power or track a reference with load sharing between the  WGs.
Let the set $\mathcal{G}\triangleq\{1,...,n\}$ denote the WGs with integrated storage.  To ease the notation, we simply denote  a wind generator of this type by WG and index it by  $i$, where $i\in\mathcal{G}$. Without loss of generality we take $l\triangleq 1$. Let $\bar{\mathcal{G}}$ denote the set of WGs without the leader WG.  In our case, $\bar{\mathcal{G}}\triangleq\{2,...,n\}$.  In addition, we assume that each WG is an agent and we have WGs exchanging information between each other.  Let $P_{e,i}$, $P_{g,i}$, $P_{r,i}$ and $P_{st,i}$ denote the electric power produced by the stator, the grid-side converter (GSC), the rotor-side converter (RSC) and the storage respectively. All these can be seen in Fig.\ref{DFIGwithstor}. In general, the WFs obtain a reference power output, $P_d$ from a system operator and this corresponds to the WF committed output. The committed power output is obtained as an outcome of a forecasting method and an Economic Dispatch process, conducted by the system operator taking into account local wind conditions. 
 Consider the dynamics of the capacitor that interfaces the Rotor Side Converter (RSC) and the GSC of each WG, as (Fig.\ref{DFIGwithstor}):
\begin{equation}
(C_{dc,i} V_{dc,i}) \frac{dV_{dc,i}}{dt}\triangleq (P_{r,i}+P_{st,i}-P_{g,i}),\; i\in \mathcal{G}
\label{capacitordyn}
\end{equation}
where $C_{dc,i},V_{dc,i}\in\mathbb{R}$ are the capacitance and the voltage potential of the capacitor respectively. When the storage is inactive, at the equilibrium we have:
\begin{equation}
P_{g,i}=P_{r,i},\;\; i\in \mathcal{G}
\end{equation}
Taking that into account,   we have that the total WF power output is equal to the total available mechanical power that is coming from the wind i.e:
\begin{equation}
\sum\limits_{i\in\mathcal{G}} P_{m,i}\approx\sum\limits_{i\in\mathcal{G}} (P_{e,i}+P_{r,i})
\end{equation}
 The total mechanical power  $\sum_{i\in\mathcal{G}} P_{m,i}$ is highly variable and since it depends on the wind speed conditions, it varies minute by minute. As a result, for a WF we might have that:
\begin{equation}
\sum\limits_{i\in\mathcal{G}} P_{m,i}\approx\sum\limits_{i\in\mathcal{G}} (P_{e,i}+P_{r,i})<P_d
\end{equation} 
  The opposite case, where more wind power is available does not pose any problem since the WF can ``\textit{spill}'' wind. On the other hand, the reasons for which the WF cannot generate  $P_d$ are twofold. Firstly, due to time delays between the time that the system operator conducts the wind forecasting and defines  $P_d$ until the time that this set-point is communicated to the WF. In this case, the WF available mean power can actually be less than $P_d$, making it not possible for the WF to generate the required amount of power. Secondly, due to the continuous wind fluctuations, the WF can only generate $P_d$ on average. Although the WF cannot always extract (from the wind) power equal to $P_d$, it can generate a total power $P_d$ when the WGs have storage devices, able to contribute. When the storage is active i.e $P_{st,i}>0$ or $P_{st,i}<0$, at the equilibrium of \eqref{capacitordyn}, we get:
\begin{equation}
P_{st,i}=(P_{g,i}-P_{r,i}),\;\; i\in \mathcal{G}
\label{storagepower}
\end{equation} 
The additional flexibility offered by the integrated storage can enable the WGs to actually generate the required $P_d$, given that the storage devices have stored enough energy. Hence, the WGs can provide short-term predictable power and balance wind intermittency as well. With storage, we have: 
 \begin{equation}
  \sum\limits_{i\in\mathcal{G}} (P_{e,i}+P_{r,i}+P_{st,i})= P_d 
  \label{storeq}
 \end{equation}
  The storage devices can provide or draw the excess power that is required by the WF to meet its committed power output in a short time-scale. That, can be translated into the next condition for the total storage power.
  \begin{cond}
 \begin{equation}
 \lim_{t\to\infty}  \sum\limits_{i\in\mathcal{G}} P_{st,i}= P_d -\sum\limits_{i\in\mathcal{G}} (P_{e,i}+P_{r,i})
   \label{storagemismatch1}
 \end{equation}
 \label{condsumpower}
 \end{cond}
 For the scope of utilizing the available resources efficiently,  the storage devices can be controlled such that they contribute equally to the excess power required by the WF to reach $P_d$. That, is translated into the following condition.
 \begin{cond}
 \label{conspowers}
 \begin{equation}
\lim\limits_{t\to\infty} P_{st,i}=\lim\limits_{t\to\infty} P_{st,j}\;,\hspace{5mm}\forall i,j\in \mathcal{G}
\label{pstequal}
\end{equation}
 \end{cond}
 \vspace{2.5mm}
 Consider the following definition.  
\begin{mydef}
\textnormal{Any utilization scenario between the energy storage devices of the WGs for which  \textit{Conditions}~\ref{condsumpower},\ref{conspowers} hold},  \textit{is called a fair utilization scenario}. 
\label{fairutil}
\end{mydef}
The problem we seek to address can now be formulated as.
\begin{problem}
\textit{To coordinate the energy storages of various state-of-the-art WGs in a completely distributed way i.e each WG  exchanges information with only its neighbor-WGs, such that \textit{Conditions}~\ref{condsumpower},\ref{conspowers} hold.}
\label{Problem1}
\end{problem}
Next, we introduce a methodology based on the \textit{Leader-follower Consensus Protocol $\mathcal{P}_1$} to address Problem~\ref{Problem1}.

\subsection{Proposed Methodology}
In this section, we propose a methodology that addresses \textit{Problem}~\ref{Problem1}. In our set-up, the physical and the communication topology of the WF is as shown in Fig.~\ref{physicalcon},\;\ref{commcon}. Notice that, the communication topology is exactly identical to the one in Fig.~\ref{nodegraph} with each WG exchanging information with one neighboring WG. Also, the physical topology is in alignment with the communication topology.
Using Eq.~ \eqref{storagepower}, Condition~\ref{conspowers} yields to: 
\begin{cond}
\label{zicondition}
\begin{equation}
\lim_{t\to\infty} z_i=\lim_{t\to\infty}z_j,\; i,j \in \mathcal{G}
\label{consens}
\end{equation}
\end{cond}
where  $z_i\triangleq (P_{g,i}-P_{r,i}),\;\;\forall i\in\mathcal{G},\;z_i\in\mathbb{R}$. It springs from \eqref{consens} that, we can pose Problem \ref{Problem1} as a constrained consensus agreement problem. Further, we can address this problem by requiring each WG  to exhange its $z_i$ information with one of its neighbors according to  Fig.\ref{commcon}, until they reach consensus on their $z_i's$. In this case,  we would ensure that Eq.~\eqref{consens}  holds. To realize this process, we propose the  protocol given below.
\begin{mydef}[Protocol $\mathcal{P}_2$] 
\begin{subequations}
\begin{align}
\intertext{\textit{Leader WG}}
\frac{d\xi_h}{dt}&\triangleq\Big(P_d-\sum_{i\in \mathcal{G}} (P_{e,i}+z_i+P_{r,i})\Big),\hspace{4mm} \xi_h\in\mathbb{R}\label{consensus_equations_p2_1}\\
\frac{dz_l}{dt}&\triangleq-k_{\alpha,l}(z_l-\xi_h),\hspace{27mm} z_l\in\mathbb{R}\label{consensus_equations_p2_2}\\
 z_l&\triangleq z_1\nonumber\\
\intertext{\textit{WG} $i$}
\frac{dz_i}{dt}&\triangleq-k_{\alpha,i}(z_i-z_{i-1})\;,\;\;\hspace{10mm} z_i\in\mathbb{R},\;i\in\bar{\mathcal{G}} \label{consensus_equations_p2_3}
\end{align}
\end{subequations}
\end{mydef}
 where $\xi_h$ is the additional state-variable of the leader WG.  
 \begin{assumption}
\textit{The information $\sum_{i\in\bar{\mathcal{G}}}(P_{e,i}+z_i+P_{g,i})$ can be retrieved by the leader WG.} 
 \label{pmtotal}
 \end{assumption}
 By defining $z^*=\Big(P_d-\sum_{i\in \mathcal{G}} (P_{e,i}+P_{r,i})\Big)$, the Protocol  $\mathcal{P}_2$ takes the form of the Protocol $\mathcal{P}_1$.	 Thus, the preceding stability analysis of $\mathcal{P}_1$ still applies. Notice that, $\mathcal{P}_2$ guarantees that the total power output of the WF will be $P_d$ since at the equilibrium, equation $\frac{d\xi_h}{dt}=0$ is identical to \eqref{storeq}. Summarizing, the Protocol $\mathcal{P}_2$ effectively addresses the Problem~\ref{Problem1}.
To implement $\mathcal{P}_2$ the GSC  and the storage controllers have to be designed appropriately Fig.\ref{DFIGwithstor}. We briefly describe how this can be done ommiting the full derivation due to space limitation. The GSC controller for the leader and the rest of the WGs can be designed using equations \eqref{consensus_equations_p2_2} and \eqref{consensus_equations_p2_3}.  Besides that, the storage controllers are critical to be designed such that by the time the GSCs reach consensus on the $z_i's$, for the storage power, it holds:
\begin{equation}
x_i=z_i,\;\forall i\in \mathcal{G}
\end{equation}
where $x_i=P_{st,i},\;\forall i$. A storage controller that achieves this objective can be found in our previous work, \cite{ectropywindstef}, \cite{Baros}. As a general guideline, we emphasize that the storage controller has to guarantee that $x_i$ is tracking the consensus state $z_i$, regulated by the GSC, in a much faster time-scale than the time-scale of the consensus protocol. This is very important for having:
\begin{center}
\textit{Condition}~\ref{zicondition} is true $\Rightarrow$
\textit{Condition}~\ref{conspowers} is true\\
\end{center}
That, results to guaranteed consensus on $P_{st,i},\;\forall i\in\mathcal{G}$, whenever consensus on the  $z_i,\;\forall i\in\mathcal{G}$ is reached.

\begin{figure}
        \centering
\includegraphics[scale=0.45]{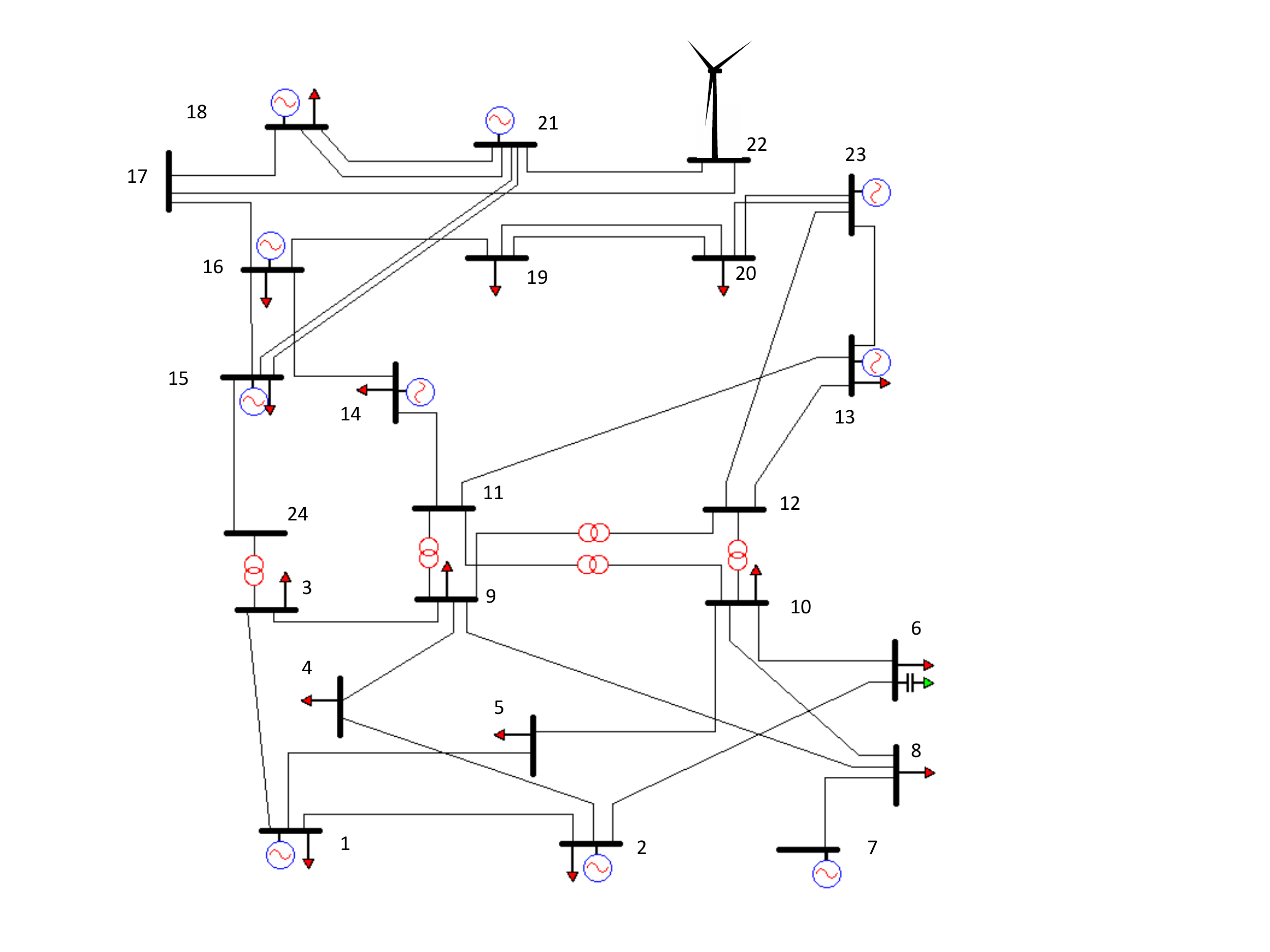}
        \caption{Modified IEEE 24-bus RT system}
        \label{24buswind}
        \vspace{0mm}
\end{figure}%

\begin{figure}
        \centering
       \begin{subfigure}{0.4\textwidth}
\includegraphics[scale=0.37]{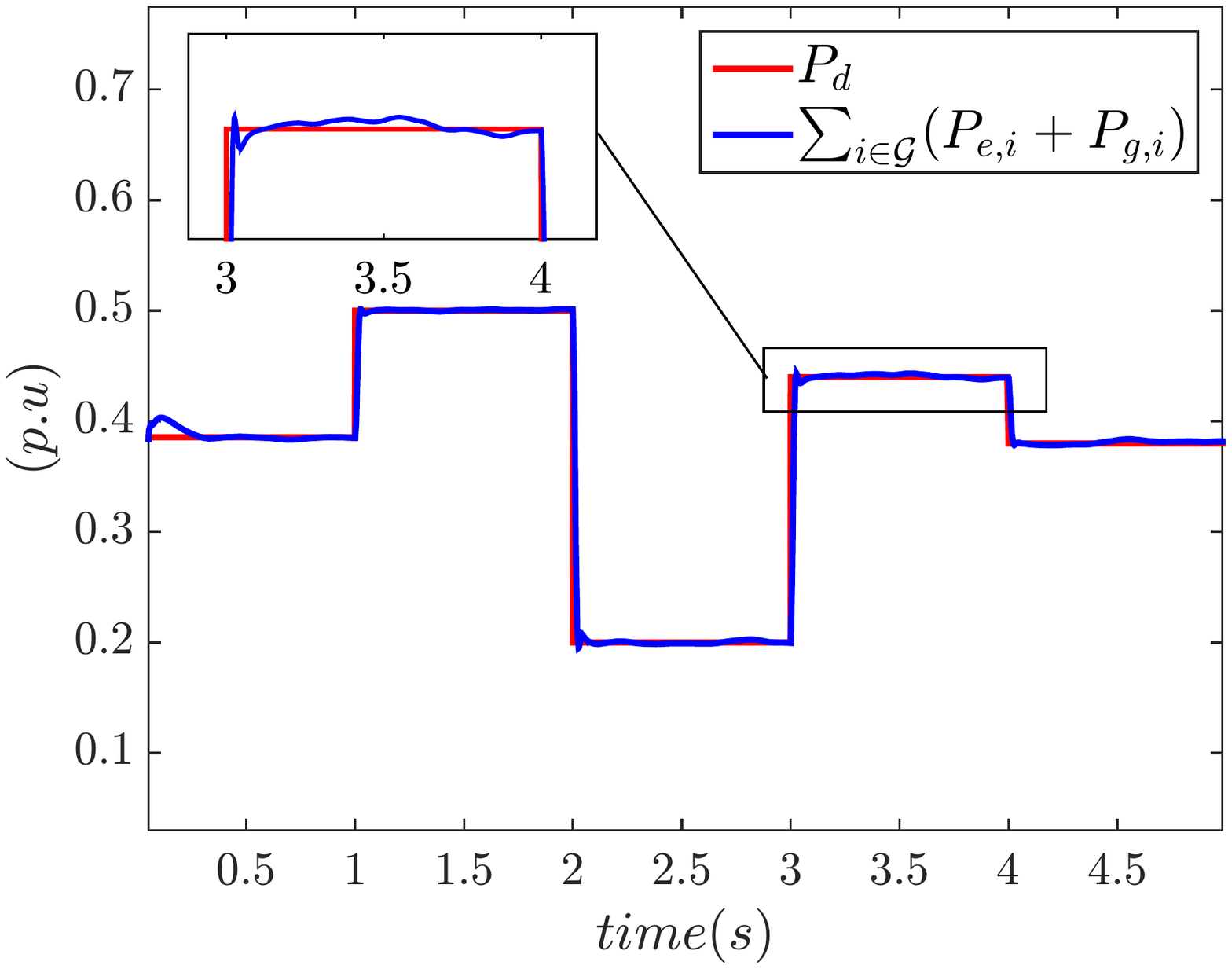}
        \caption{{\footnotesize Total WF power output}}      \vspace{0.5mm}
        \label{scen1_1}
    \end{subfigure}\\
    \begin{subfigure}{0.4\textwidth}    
\includegraphics[scale=0.367]{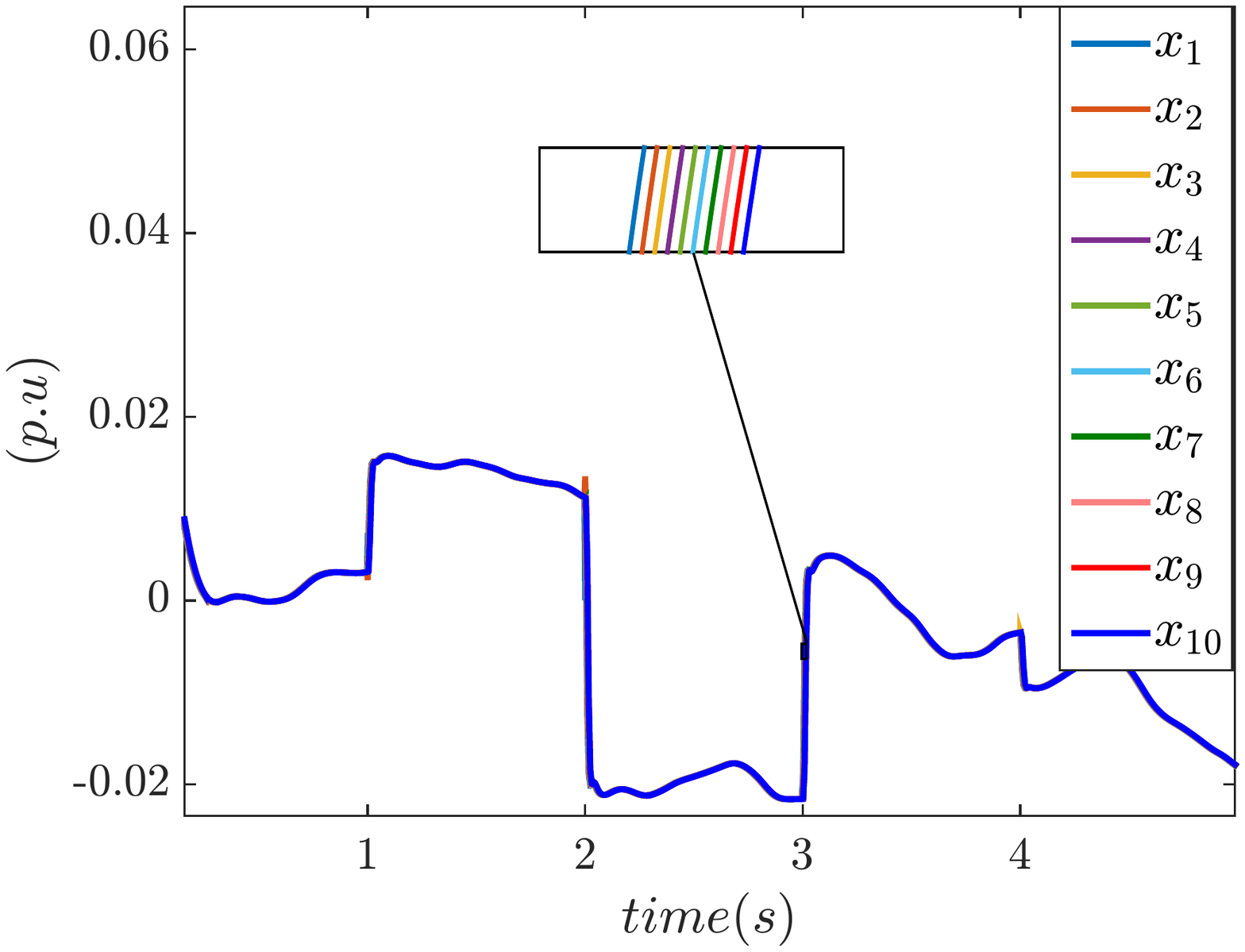}
        \caption{{\footnotesize Storage power of each WG, $P_{st,i}$}}       \vspace{0.5mm}
        \label{scen1_2}
    \end{subfigure}\\
        \begin{subfigure}{0.4\textwidth}    
\includegraphics[scale=0.37]{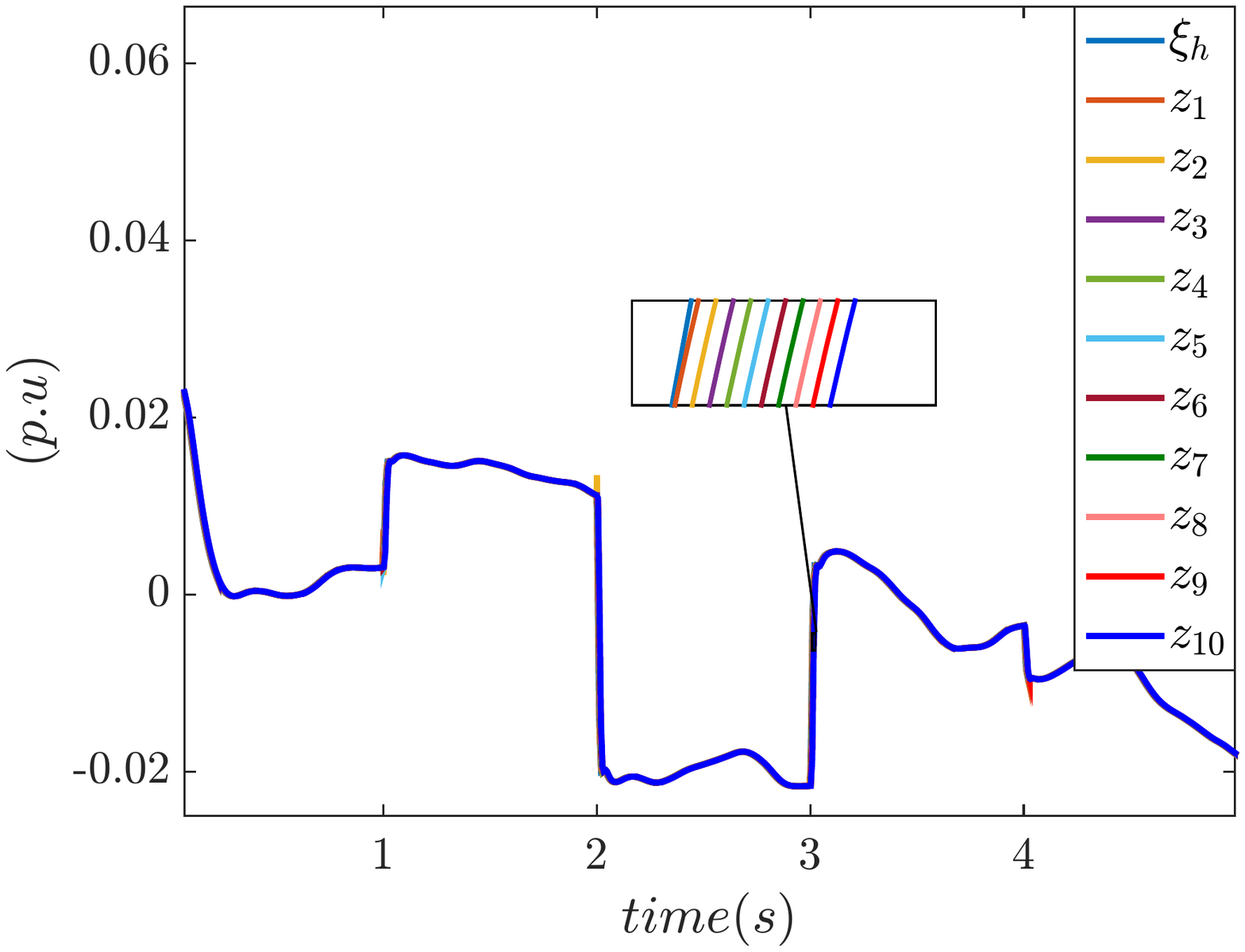}
        \caption{{\footnotesize Consensus state of each  WG $z_{i}$}}       \vspace{0.5mm}
        \label{scen1_3}
    \end{subfigure}\\
    \caption{WF response under \textit{Scenario 1 }}
        \label{Scen1}
        \vspace{-6.5mm}
\end{figure}
\begin{figure}
    \centering
    \begin{subfigure}[b]{0.42\textwidth}
        \centering
\includegraphics[scale=0.36]{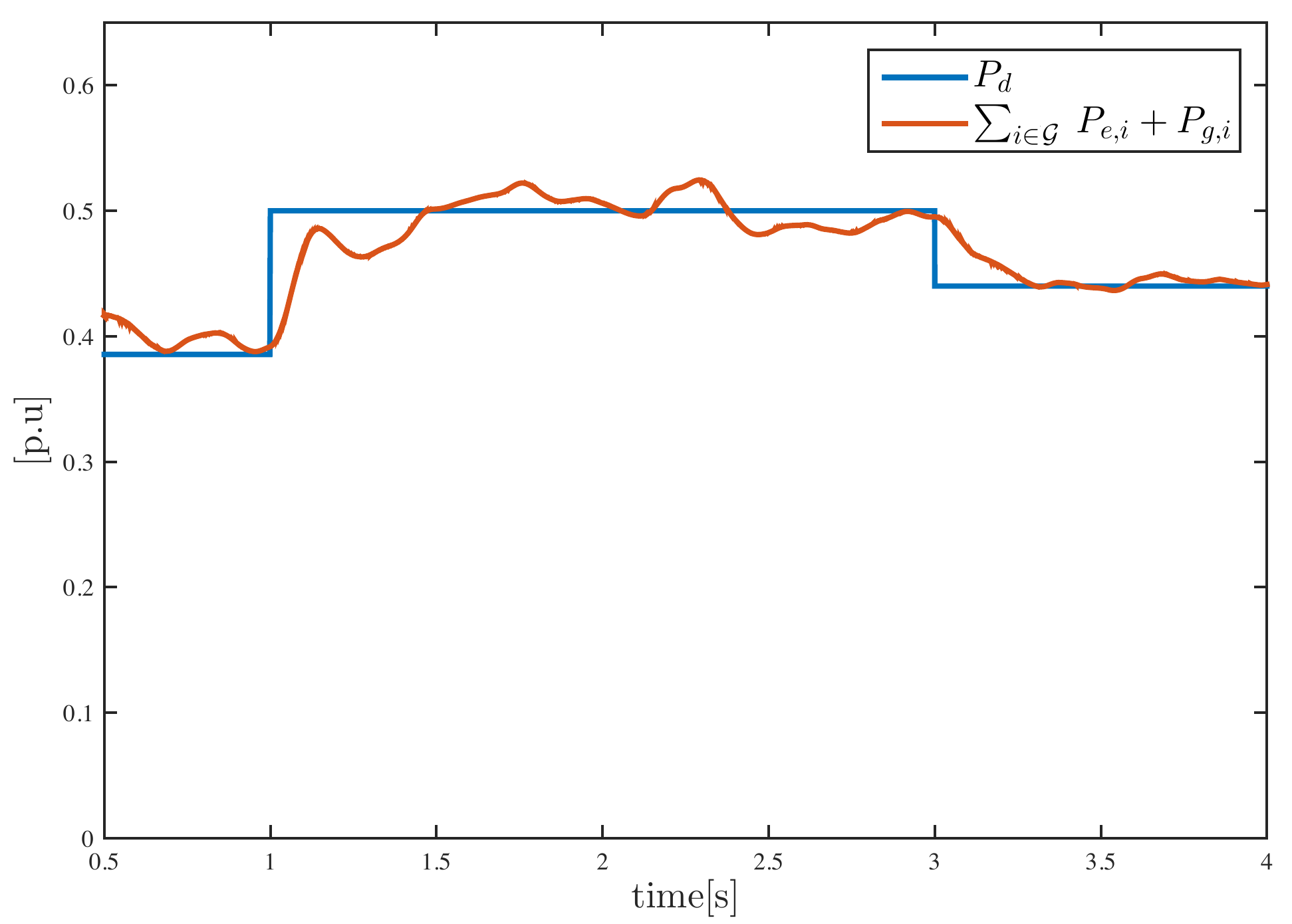}
        \caption{{\footnotesize Total WF power output}}
        \vspace{0.5mm}
        \label{scen3_1}
    \end{subfigure}%
    
    \begin{subfigure}[b]{0.4\textwidth}
        \centering
\includegraphics[scale=0.41]{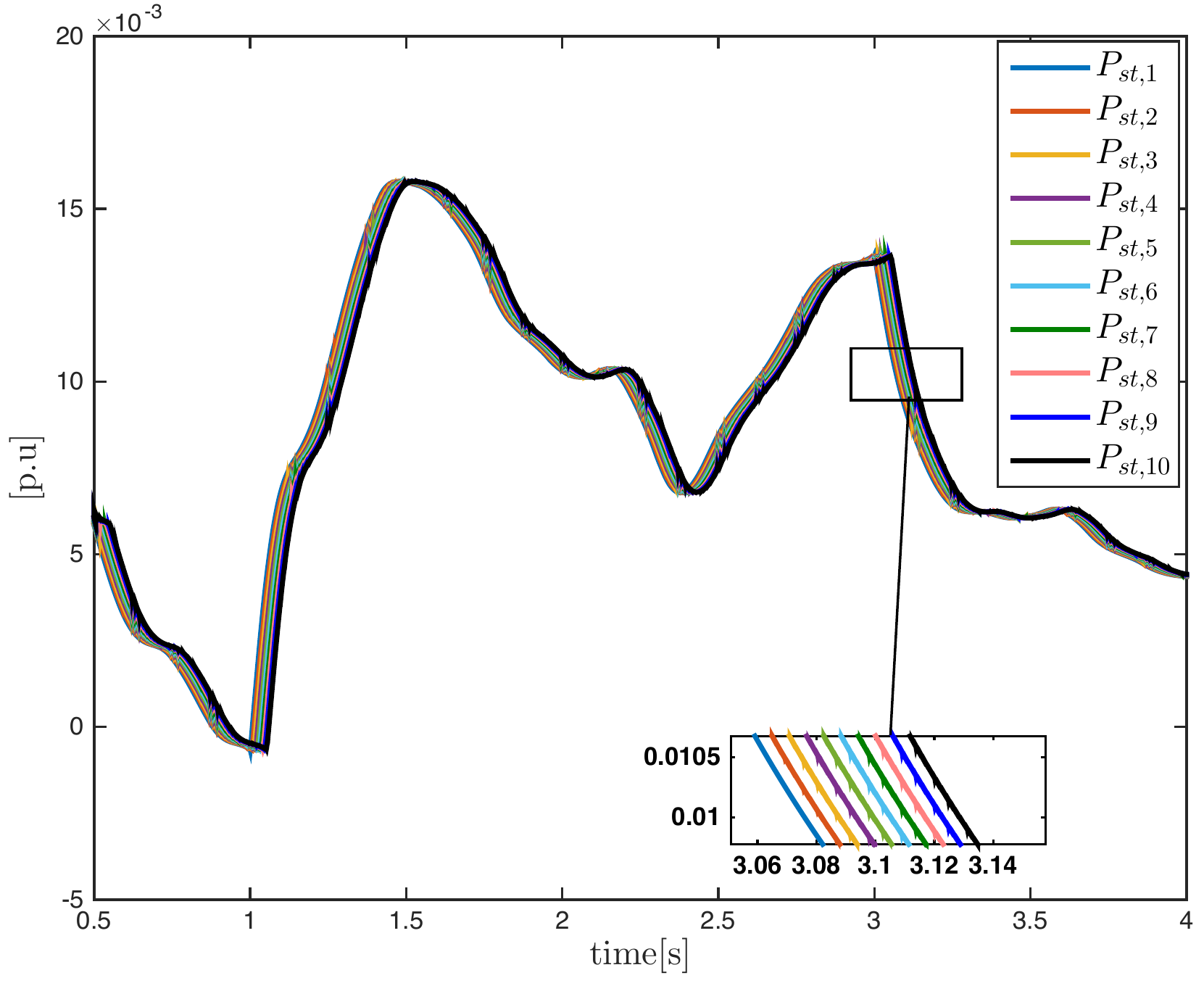}
        \caption{{\footnotesize Storage power of each WG, $P_{st,i}$}} 
        \vspace{0.3mm}
        \label{scen3_2}
    \end{subfigure}
     \begin{subfigure}[b]{0.4\textwidth}
        \centering
\includegraphics[scale=0.41]{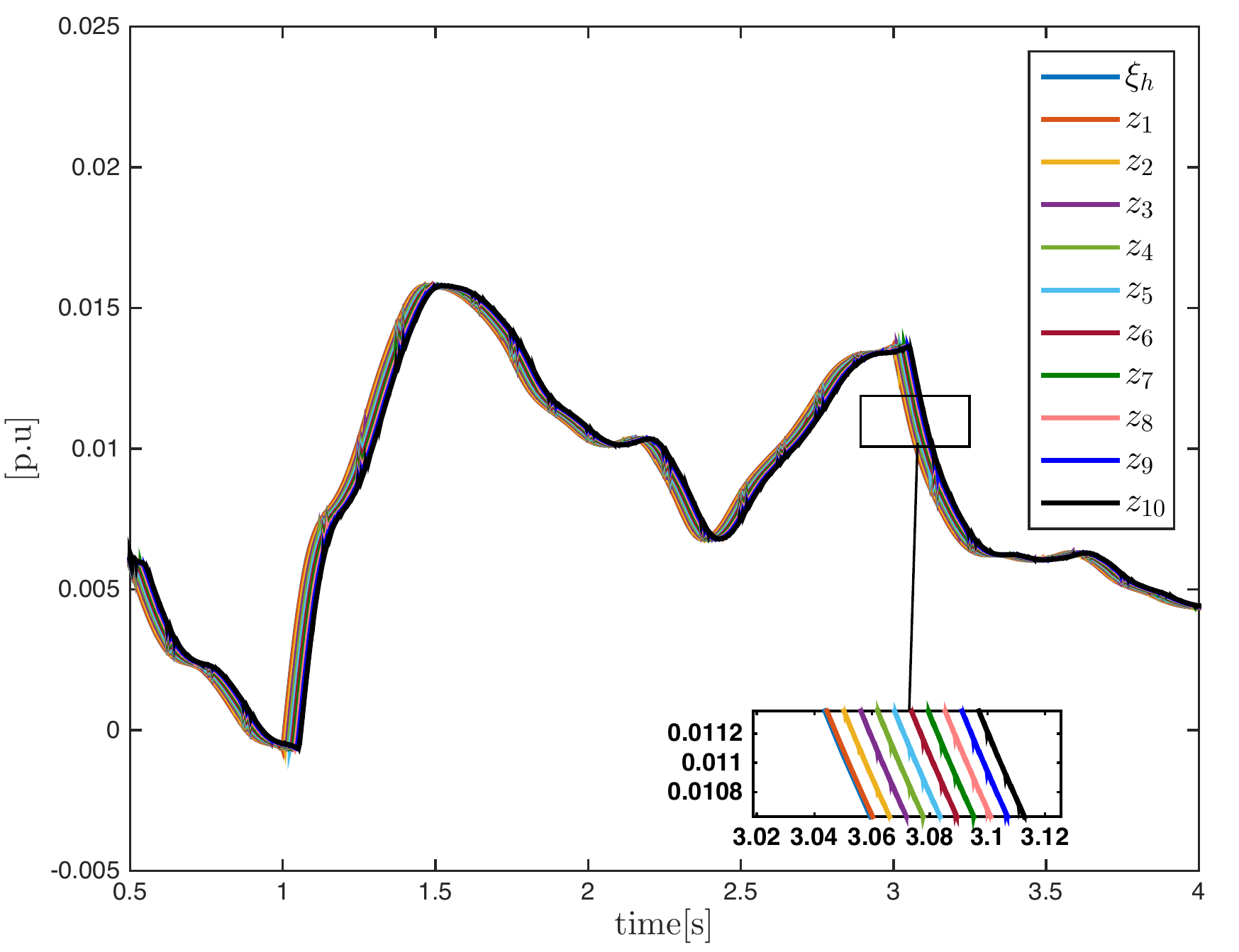}
        \caption{{\footnotesize Consensus state of each  WG $z_{i}$}}
        \label{scen3_3}
    \end{subfigure}
    \caption{WF response under \textit{Scenario 2}}
    \label{Scen3}
    \vspace{-3.5mm}
\end{figure}  

  \section{Performance Evaluation}
To validate our results, we apply our methodology on the modified IEEE 24-bus RT system. The goal is the WF power output to track a varying reference when is suddenly ordered to and when the available wind power is not  sufficient. Also, the storage power outputs are desired to contribute to this power mismatch in an equal-sharing manner. At bus 22 of the 24-bus power grid, we have a WF comprised of 10 WGs where each of them has a storage encapsulated into its scheme Fig.\ref{DFIGwithstor}. The physical as well as the communication structure of the WF is the same as in Fig.\ref{physicalcon}, \ref{commcon}. We verify our proposed methodology, by conducting simulations under the next 2 critical scenarios. 
\begin{itemize}
\item \textit{\textbf{Scenario 1}:} \textit{Step-wise changes on the reference $P_d$ and no time-delays.}
\item \textit{\textbf{Scenario 2}:} \textit{Step-wise changes on the reference $P_d$ and fixed delays $r=5ms$.}
\vspace{2mm}
\end{itemize}
Under \textit{Scenario 1}, the WF dynamics evolve as shown in Fig.~\ref{Scen1}. More specifically,  we observe from Fig.~\ref{scen1_1} that the total power of the WF is  tracking the reference  with good transient response i.e no oscillations, no overshoot. Hence, the WF total power output is reaching the committed reference  rapidly. Additionally, from Fig.~\ref{scen1_2} we can see that the power coming from each storage is identical. This, verifies that the storage power outputs reached consensus.  Direct comparison of the responses  in Fig.~\ref{scen1_3} and Fig.~\ref{scen1_2} gives that, the storage power outputs and the consensus states are one-to-one identical to each other. 
This, verifies the effectiveness of the time-scale-separation-based design of the GSC and storage controllers which we explain as follows. The storage devices continuously generate the  power that is required by the GSCs, whereas, the GSCs via the consensus protocol, regulate their power to match the total WF power, $P_d$. In the second scenario, we regarded time-delays on the exchange of information between the WGs. The WF response under this scenario can be seen in Fig.~\ref{Scen3}.    It is obvious that the time-delays deteriorated the tracking response,  Fig.\ref{scen3_1}, but the control objectives were still achieved. Specifically, the WF power output was tracking a reference and  consensus on the power output of each storage device Fig.\ref{scen3_2}, was attained. These results verify the time-delay independent stability property of the proposed protocol, proved in Section~\ref{delayindepend}. 
   We conclude from the above results that, our proposed methodology is able to coordinate a fleet of WGs such that the total power from a WF is tracking a reference  while using the available storage devices efficiently i.e addressing Problem~\ref{Problem1}.

\section{Concluding Remarks and Future Work}
In this paper, we introduced a \textit{Leader-follower Consensus Protocol} under a specific communication set-up. We first analyzed and proved its stability properties using singular perturbation arguments. Further, we extended these results by proving that its stability property is not altered under time-delays i.e its stability property is delay-independent. On the practical side, we showed how this protocol can find applications in power grids by proposing a methodology that can be adopted by WFs to address an emerging challenge. In particular the challenge is to coordinate, through appropriate communication and control, the storage devices of a fleet of state-of-the-art WGs toward achieving a group objective. In our case the group objective was the WF power output to track a reference using the storage devices contributing in a fairly manner.   Finally, we used the IEEE 24-bus RT system to verify our theoretical results and to demonstrate the effectiveness of our proposed methodology. In future work, we would like to explore how our methodology can be advanced such that the WFs can provide more services to the grid e.g frequency and inertial response, with efficient coordination of its available WGs.

\nocite{*}
\IEEEpeerreviewmaketitle
\bibliographystyle{unsrt}
\bibliography{Distr_cont_of_DFIGs_wstor_ACC2016}{}

\end{document}